\documentclass[12pt]{article}
\usepackage[T1]{fontenc}
\usepackage[
left=0.6in,
right=0.6in,
top=0.6in,
bottom=0.6in
]{geometry}   
\usepackage{setspace}
\usepackage{amsmath,amssymb} 
\usepackage{amsthm}

\usepackage{algorithm} 
\usepackage{algpseudocode}
\usepackage{float}
\usepackage{placeins}

\usepackage{amsmath}
\usepackage{amsfonts}
\usepackage{amssymb}
\usepackage{amsthm}
\usepackage{nicefrac}
\usepackage{mathrsfs}

\usepackage{graphicx}
\graphicspath{{media/}}

\usepackage{booktabs}
\usepackage{array}
\usepackage{tabularx}
\usepackage{makecell}
\usepackage{adjustbox}
\usepackage{subcaption}

\usepackage{algorithm}
\usepackage{algpseudocode}

\usepackage{xcolor}
\usepackage{soul}
\usepackage{enumerate}
\usepackage{changepage}
\usepackage{microtype}

\usepackage{url}
\usepackage{hyperref}

\usepackage{parskip}
\usepackage{mathrsfs}
\usepackage{graphicx}
\usepackage{xcolor}
\usepackage{subcaption}
\usepackage{forest}
\usepackage{amsmath}
\usepackage{amssymb}
\usepackage{hyperref}
 \usepackage{booktabs}
\usepackage{longtable}
\usepackage{array}
\usepackage{tabularx}

\newtheorem{lemma}{Lemma}
\newtheorem{theorem}{Theorem}

\title{
Budget-Constrained Fault-Tolerant Mutual Visibility
for Autonomous Robots under the Mobility Fault Model
}

\author{
Prakhar Shukla,
Animesh Maiti\thanks{Corresponding author: p22ma201@iitj.ac.in},
Shivam Kumar,
Subhash Bhagat
\\[1ex]
Department of Mathematics\\
Indian Institute of Technology Jodhpur\\
Jodhpur, Rajasthan, India\\[1ex]
}

 \date{}

\begin{document}

\maketitle

\begin{abstract}
We investigate the mutual visibility problem for a swarm of $n\ge3$ autonomous mobile robots under the budget-constrained mobility fault model. The robots are opaque, so if three robots are collinear, the middle robot obstructs the visibility between the other two. Each robot is assigned a finite movement budget, reflecting its limited energy, that bounds the total distance it may traverse during the execution. Moreover, an arbitrary number of robots may become permanently immobile due to mobility faults. The objective is to design a distributed algorithm that enables the non-faulty robots to coordinate their movements so that, within a finite time, every non-faulty robot attains unobstructed visibility of all robots in the system, including the faulty ones, while respecting the prescribed movement budget. We consider luminous robots operating under the $\mathsf{SSYNC}$ model with non-rigid movements, without any agreement on their local coordinate systems, and equipped only with a {\it common fixed reference point}. We present a deterministic distributed algorithm that solves the problem despite an arbitrary number of mobility faults. The algorithm guarantees mutual visibility for the non-faulty robots, respects the movement budget of every robot, provides collision-free movements for the robots, and uses only 12 light colors.
\end{abstract}
\noindent
\textbf{Keywords:} Autonomous mobile robots; Mutual visibility;
Mobility faults; Budget-constrained movement; Luminous robots;
Semi-synchronous model; Non-rigid movement; Common reference point.

\section{Introduction}
We consider a finite set $\mathcal{R}$ of $n$ anonymous, homogeneous, opaque, and luminous point robots deployed in the Euclidean plane. Robots have unlimited visibility, but their line of sight is obstructed by opaque robots. Each robot has its own local coordinate system, with no agreement on the coordinate axes or chirality, and executes the same deterministic algorithm. Communication is achieved only by observing the positions and light colors of visible robots. Robots are oblivious except for persistent lights operating under the \textsc {FALL} model. Each robot is equipped with a persistent light that can assume a color from a finite set shared by all robots. The selected color remains unchanged across subsequent activations until the robot explicitly changes it. Thus, the light serves both as persistent local memory and as a means of communication among the robots. The robots operate under the semi-synchronous ($\mathsf{SSYNC}$) model. Under the $\mathsf{SSYNC}$ model, time is divided into discrete rounds. In each round, the scheduler activates an arbitrary nonempty subset of robots, while the remaining robots stay stationary. All robots activated in the same round simultaneously execute one complete \textsc{Look--Compute--Move} cycle. During the \textsc{Look} phase, each
activated robot takes a snapshot containing the positions and light colors
of all robots visible to it, together with its own current light color.
During the \textsc{Compute} phase, it uses the obtained snapshot to compute
a destination point according to the algorithm and may also update its
light color. Finally, during the \textsc{Move} phase, the robot moves toward
its computed destination. The scheduler is fair, meaning that every robot
is activated infinitely often. An epoch is the shortest sequence of
consecutive rounds during which every robot is activated at least once.
A robot may be activated multiple times within the same epoch. By the
fairness of the scheduler, every epoch is finite. Robot movements are non-rigid, i.e., a robot may be stopped before reaching its computed destination after moving at least an unknown positive distance $\delta$. 
In addition, every robot can detect a common fixed external reference
beacon $P^*$ and obtain its relative position in its local coordinate
system during each \textsc{Look} phase.
 
Each robot is initially assigned the same finite movement budget $B$, which bounds the total Euclidean distance it may traverse during the execution. A robot may suffer a \emph{mobility fault} at any time, after which it becomes permanently immobile while retaining its sensing, computation, and light capabilities. Faulty robots cannot distinguish themselves from fault-free robots. Throughout the paper, we consider the budget-constrained mobility fault model, in which an arbitrary number of robots may become mobility faulty.


\subsection{Related Work}

The mutual visibility problem for autonomous mobile robots was formally introduced by Di Luna \emph{et al.}~\cite{di2014mutual} in the classical oblivious robot model. Under obstructed visibility, robots are considered opaque, and hence a robot lying on the line segment joining two other robots prevents them from seeing each other. The objective is therefore to transform an arbitrary initial configuration into one in which no three robots are collinear, while avoiding collisions. Di Luna \emph{et al.} studied the problem for oblivious robots under the $\mathsf{SSYNC}$ scheduler and proposed a convex-hull-based solution. Closely related to mutual visibility is the problem of forming a \emph{general position}. Bhagat \emph{et al.}~\cite{bhagat2016general} studied this problem for asynchronous opaque robots under one-axis agreement and provided a collision-free algorithm that removes collinearities. Early investigations also focused on the efficiency of mutual-visibility algorithms. Sharma \emph{et al.}~\cite{sharma2015bounds} analyzed bounds on existing mutual-visibility algorithms, while subsequent work improved the running time for oblivious robots; in particular, Sharma \emph{et al.}~\cite{sharma2019oblivious} obtained an $\mathcal{O}(n)$-time algorithm in the fully synchronous setting.

A major line of research considers the \emph{luminous robot model}, in which each robot carries a persistent externally visible light whose color can be used both as a small amount of persistent memory and as a means of communication. Di Luna \emph{et al.}~\cite{di2017mutual} gave fundamental results for this model, showing that mutual visibility can be solved using two colors in the rigid $\mathsf{SSYNC}$ setting and three colors in $\mathsf{ASYNC}$; they also considered non-rigid movements and established solutions under additional assumptions such as one-axis agreement. Sharma \emph{et al.}~\cite{sharma2015mutual} subsequently studied the number of colors required and obtained color-efficient solutions. In parallel, Vaidyanathan \emph{et al.}~\cite{vaidyanathan2015logarithmic} investigated the time complexity of complete visibility and proposed a logarithmic-time algorithm for robots with lights. Sharma \emph{et al.}~\cite{sharma2016complete} further reduced the running time to $\mathcal{O}(1)$ rounds in synchronous settings. Bhagat and Mukhopadhyaya~\cite{bhagat2017optimum} considered an energy-oriented variant and proposed an algorithm that restricts the number of movements performed by each robot. For the asynchronous setting, constant-time complete visibility was subsequently obtained by Sharma \emph{et al.}~\cite{sharma2021constant}. Thus, much of the luminous-robot literature has concentrated on reducing either the number of colors, the number of movements, or the overall running time.

Mutual visibility has also been investigated under alternative computational models and geometric domains. Bhagat and Mukhopadhyaya~\cite{bhagat2019persistent} considered the $\mathsf{FSTATE}$ model, in which robots have a constant amount of persistent internal memory rather than externally visible lights, and showed that one bit of memory is sufficient in the semi-synchronous setting. This direction was further developed by Bhagat~\cite{bhagat2020optimum}, who studied optimum mutual visibility with finite-state robots. The problem has also been extended from point robots to robots having physical extent. Sharma \emph{et al.}~\cite{sharma2018fatlights} studied complete visibility for fat luminous robots, whereas Sharma \emph{et al.}~\cite{sharma2018fat} investigated the corresponding problem for non-transparent fat robots without lights. More recently, Bose \emph{et al.}~\cite{bose2023fatcamera} considered a more realistic sensing model in which a fat robot is equipped with a slim omnidirectional camera and proposed a mutual-visibility algorithm under the $\mathsf{SSYNC}$ scheduler. Another line of work restricts robot movements to discrete environments. Adhikary \emph{et al.}~\cite{adhikary2022grid} studied asynchronous luminous robots on an infinite grid, where movements are restricted to grid edges, and Kim \emph{et al.}~\cite{kim2025grid} subsequently investigated two-color complete-visibility algorithms on the grid under stronger coordinate-agreement assumptions. These results demonstrate that the difficulty of mutual visibility depends strongly on the robots' geometric representation, movement domain, memory, and degree of coordinate agreement.

Fault-tolerant mutual visibility constitutes another important branch of the literature. Aljohani and Sharma~\cite{AljohaniS18} considered crash faults and gave a complete-visibility algorithm tolerating a faulty robot under coordinate-agreement assumptions. Poudel \emph{et al.}~\cite{poudel2021fault} studied the more general mobility-fault model, in which a faulty robot loses its ability to move but can continue to execute the other phases of its Look--Compute--Move cycle. Their algorithm works under one-axis agreement and tolerates multiple mobility faults. A different fault model was considered by Sharma~\cite{Sharma18}, where faults affect the lights of the robots rather than their mobility. More recently, Pramanick \emph{et al.}~\cite{pramanick2024angular} investigated mutual visibility in the presence of angular inaccuracies in robot movements and also considered the simultaneous presence of mobility faults. Their algorithms work for luminous robots under one-axis agreement and show that mutual visibility can still be obtained despite bounded directional errors.

The assumption of a common axis was subsequently relaxed in fault-tolerant settings. Pramanick \emph{et al.}~\cite{pramanick2025fault} studied mobility failures for completely disoriented robots, i.e., robots having no agreement on their local coordinate systems. They established fault-tolerant algorithms for multiple faults in the $\mathsf{FSYNC}$ setting and for a single fault in the $\mathsf{SSYNC}$ and $\mathsf{ASYNC}$ settings. More recently, Pramanick \emph{et al.}~\cite{pramanick2026asynchronous} further investigated asynchronous fault-tolerant mutual visibility for disoriented robots. Along a different direction, Bhagat \emph{et al.}~\cite{bhagat2025min} studied the \emph{min-move mutual visibility} problem under the strong combination of disorientation, $\mathsf{ASYNC}$ scheduling, and non-rigid movements. They showed that the problem is impossible without additional persistent information and proposed a luminous algorithm in which every robot moves at most once.

From the viewpoint of solution techniques, a large fraction of the above algorithms are \emph{convex-hull-based}: robots progressively expand, contract, or otherwise manipulate the convex hull until all relevant robots occupy suitable hull vertices~\cite{di2014mutual,di2017mutual,vaidyanathan2015logarithmic,sharma2016complete,sharma2015mutual,sharma2021constant,sharma2018fatlights,sharma2018fat,pramanick2025fault}. Although this approach is powerful, it generally requires robots to travel to geometrically prescribed hull positions and therefore provides limited control over the amount of movement performed by an individual robot. Non-convex-hull techniques instead remove collinearities by carefully selecting local destinations without requiring all robots to become vertices of a common convex hull. Such approaches have been used for general-position formation and movement-efficient mutual visibility~\cite{bhagat2016general,bhagat2017optimum,bhagat2025min}. However, existing non-convex-hull solutions either rely on additional coordinate agreement, do not tolerate mobility failures, or address substantially different fault assumptions.

To the best of our knowledge, no existing algorithm simultaneously provides a \emph{non-convex-hull-based} solution for disoriented robots under mobility faults while also enforcing an explicit budget on robot movement. In this paper, we address this gap by studying the mutual visibility problem under a \emph{budget-constrained mobility fault model}. Our algorithm follows a non-convex-hull strategy and is designed so that the motion of every non-faulty robot remains compatible with the prescribed movement budget. This additional constraint fundamentally distinguishes our setting from earlier mobility-fault algorithms~\cite{poudel2021fault,pramanick2025fault,pramanick2026asynchronous}: directly moving robots toward convex-hull vertices may require an arbitrarily large displacement relative to the available budget, and hence the existing convex-hull-based techniques cannot be applied without substantial modification.

\subsection{Motivation and Our Contribution}

We study fault-tolerant mutual visibility under a finite movement budget, where
each robot can traverse a total Euclidean distance of at most $B$. Existing
fault-tolerant solutions either rely on one-axis agreement or use convex-hull
based relocation strategies. In contrast, we assume no agreement on local
coordinate axes or chirality. Instead, every robot can identify a common fixed
reference point $P^*$ lying outside the convex hull of the initial configuration.
The point $P^*$ provides only a common geometric landmark; it does not induce a
common direction, orientation, or coordinate system.

We consider a strong fault-tolerant visibility requirement: every non-faulty
robot must eventually have unobstructed visibility of every robot in the
system, including mobility-faulty robots. We present a deterministic
non-convex-hull-based algorithm for luminous robots operating under $\mathsf{SSYNC}$ with
non-rigid movements. The algorithm uses the relative distances of robots from
$P^*$ to coordinate movements and resolve cyclic dependencies. Each movement
is chosen using local geometric clearances and the robot's remaining movement
budget, so no robot is required to reach a prescribed distant target.

Our algorithm tolerates an arbitrary number of permanent mobility faults,
guarantees collision-free movements, respects the movement budget of every
robot, uses $12$ light colors, and terminates in $O(n)$ epochs. Table~\ref{tab:survey} summarizes the main differences between the existing mutual visibility algorithms under mobility faults and our proposed approach.

\begin{table*}[ht]
\caption{Existing work on mutual visibility algorithms under mobility fault.}
\label{tab:survey}
\centering
\small
\setlength{\tabcolsep}{4pt}
\renewcommand{\arraystretch}{1.15}

\begin{tabularx}{\textwidth}{@{}
>{\raggedright\arraybackslash}p{2.8cm}
>{\centering\arraybackslash}p{0.9cm}
>{\centering\arraybackslash}p{1.3cm}
>{\raggedright\arraybackslash}X
>{\centering\arraybackslash}p{0.9cm}
>{\centering\arraybackslash}p{1.2cm}
>{\raggedright\arraybackslash}p{2.7cm}
@{}}
\toprule

Reference
& Faults
& Scheduler
& Axis/assumption
& Colors
& Time
& Approach
\\

\midrule

Poudel et al.~\cite{poudel2021fault}
& \(f\)
& \(\mathsf{ASYNC}\)
& One-axis agreement
& \(3\)
& \(O(n)\)
& Non-convex hull
\\

Pramanick et al.~\cite{pramanick2026asynchronous}
& \(f\)
& \(\mathsf{ASYNC}\)
& No axis agreement
& \(35\)
& \(O(n)\)
& Convex hull
\\

\midrule

\textbf{This work}
& \(\boldsymbol{f}\)
& \(\boldsymbol{\mathsf{SSYNC}}\)
& \textbf{Beacon and uniform movement budget}
& \(\mathbf{12}\)
& \(\boldsymbol{O(n)}\)
& \textbf{Non-convex hull}
\\

\bottomrule
\end{tabularx}

\end{table*}

 \section{Model and Definitions}

We consider the following robot model and terminologies to design our algorithms:

\begin{itemize}
    \item \textbf{Robots:} Let $\mathcal{R}=\{r_1,r_2,\ldots,r_n\}$ be a set of $n$ anonymous, homogeneous, autonomous, opaque, and luminous point robots deployed in the Euclidean plane. The robots do not know the value of $n$, have unlimited visibility range, and communicate only through observations. Due to opacity, if three robots are collinear, then the middle robot obstructs the visibility between the other two.
    \item For two points  $x$ and $y$ in the plane, let $\|x-y\|$
denote the Euclidean distance between $x$ and $y$.
    \item \textbf{Computation Model:} Robots execute the \textsc{Look--Compute--Move} cycle under the semi-synchronous ($\mathsf{SSYNC}$) model with a fair scheduler. Robot movements are non-rigid. There exists a fixed but unknown
constant $\delta>0$ such that, if a non-faulty robot $r_i$ computes
a destination $d_i$ and $\|r_i-d_i\|\leq\delta,$
then it reaches $d_i$; otherwise, the scheduler may stop it after
any displacement in $[\delta,\|r_i-d_i\|]$.
Thus, whenever a non-faulty robot computes a destination distinct
from its current position, it performs a strictly positive movement. Robots are oblivious except for persistent externally visible lights operating under the \textsc{FALL} model.

\item \textbf{Reference Beacon:}
The robots have access to a common fixed external reference beacon
$P^*$. The beacon is not a robot and is detected through an external
localization or beacon-sensing mechanism; hence its detection is not
affected by the opacity of the robots. During every Look operation,
robot $r_i$ obtains the relative position of $P^*$ expressed in its
current local coordinate system. Thus, $P^*$ provides a common
geometric landmark, but does not induce a common coordinate system,
axis orientation, handedness, or chirality. We assume that $P^*$ is
distinct from every initial robot position and lies strictly outside the convex hull of the initial robot configuration. We henceforth refer to $P^*$ simply as the {\it reference point}.


    \item \textbf{Movement Budget:} Each robot is assigned the same finite
movement budget $B>0$, representing the maximum Euclidean distance
it may traverse during the execution. At every activation, robot
$r_i$ can locally obtain its current remaining movement budget,
denoted by $B_i(t)$. Initially, $B_i(0)=B$. Only the distance actually
traveled is deducted from $B_i(t)$. Thus, interrupted non-rigid
movements consume only the executed portion, while a movement
attempt by a mobility-faulty robot that fails to initiate consumes
no budget. If $\ell_i$ denotes the total distance traveled by $r_i$,
then $\ell_i\leq B$.


    \item \textbf{Mobility Faults:} A robot may develop a permanent mobility fault at any stage of the execution. A mobility-faulty robot can continue to perform the \textsc{Look} and \textsc{Compute} phases and update its light, but it can no longer move. Faulty robots cannot distinguish themselves from fault-free robots. In contrast to a non-rigid interruption, a mobility fault may result
in zero executed displacement even when the computed destination is
different from the robot's current position.

 \item \textbf{Configuration:}
The position of robot $r_i$ at time $t$ is denoted by $r_i(t)$,
and its light color by $c_i(t)$. The \emph{geometric configuration}
at time $t$ is
\[
\mathcal{R}(t)=\{r_1(t),r_2(t),\ldots,r_n(t)\}.
\]
The complete robot configuration is
\[
\mathcal{C}(t)=
\{(r_i(t),c_i(t)):1\leq i\leq n\}.
\]
Thus, $\mathcal{R}(t)$ describes only the robot positions, whereas
$\mathcal{C}(t)$ also records the persistent light states that govern
the execution of the algorithm. A configuration is called \emph{linear} if all robot positions lie
on a single line, and \emph{non-linear} otherwise. Let
$\widetilde{R}_{L}$ and $\widetilde{R}_{NL}$ denote, respectively, the
sets of linear and non-linear geometric configurations.

    \item \textbf{Collinearity:} A line containing at least three robot positions is called a \emph{line of collinearity}. A robot is \emph{non-terminal} on such a line if it lies in the interior of the segment joining two other robot positions on the line; otherwise, it is \emph{terminal}. A non-terminal robot lying on at least two lines of collinearity is called a \emph{junction robot}. For each robot $r_i$, let $\mathcal{L}_i^0$ denote the set of
initial lines of collinearity containing $r_i(0)$.

    \item \textbf{Visibility:} The vision of robot $r_i$ at time $t$ is the set $\mathcal{V}_i(t)$ consisting of all robot positions visible to $r_i$, including its own position.

\begin{figure}[htbp]
\centering
\includegraphics[width=0.5\linewidth]{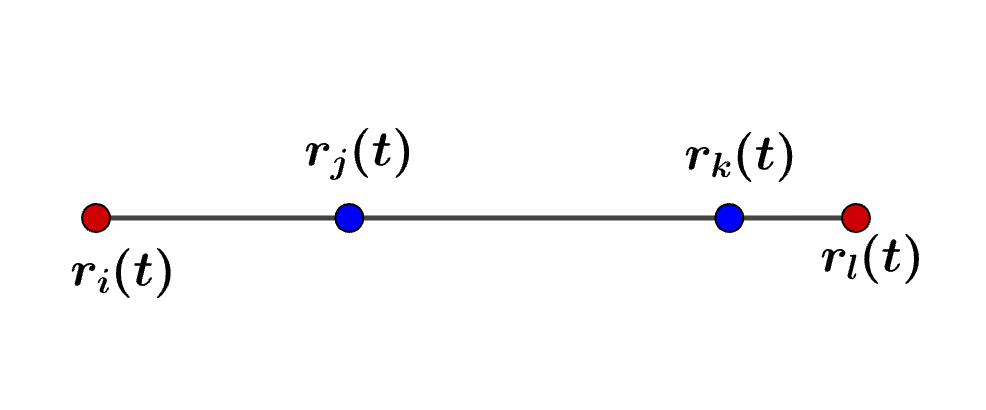}
\vspace*{-0.5cm}
\caption{Illustration of adjacent and friend robots. The adjacent robot pairs are $(r_i,r_j)$, $(r_j,r_k)$, and $(r_k,r_l)$. Since $r_j$ and $r_k$ are non-terminal robots, $r_i$ and $r_k$ are the friend robots of $r_j$, whereas $r_j$ and $r_l$ are the friend robots of $r_k$.}
\label{fig:fri_adj}
\end{figure}

    \item {\bf Adjacent and friend robots:} Two robots $r_i$ and $r_j$ are said to be \emph{adjacent} at time $t$ if they are collinear and no robot lies in the open line segment joining $r_i(t)$ and $r_j(t)$. Equivalently, they are mutually visible along the line containing them. Furthermore, let $r_i$ be a non-terminal robot on a line of collinearity $\mathcal L$ at time $t$, and let $r_j(t)$ and $r_k(t)$ lie on the two opposite sides of $r_i(t)$ along $\mathcal L$, with both robots visible to $r_i$. Then $r_j$ and $r_k$ are called the \emph{friend robots} of $r_i$ with respect to $\mathcal L$. (see Figure~\ref{fig:fri_adj})

    \item \textbf{Cyclic Dependency:} A set of lines of collinearity forms a \emph{cyclic dependency} if consecutive lines intersect at distinct junction robots and the first and last lines also intersect, thereby forming a cycle. The shared junction robots are called \emph{critical non-terminal robots}. If the first and last lines do not intersect, the collection forms a \emph{non-cyclic dependency chain}. (see Figure~\ref{fig:cyclicdep})
\end{itemize}

\textbf{Problem Definition.}
Let $\mathcal R=\{r_1,\ldots,r_n\}$, $n\geq 3$, be a set of robots initially occupying distinct positions in the Euclidean plane. An arbitrary number of robots may suffer permanent mobility faults during the execution. Let $\mathcal R_{NF}(t)$ denote the set of non-faulty robots at time $t$.

The \emph{budget-constrained strong fault-tolerant mutual visibility problem}
requires that, from every admissible initial configuration, there exists a
finite time $T$ such that every non-faulty robot has unobstructed visibility
of every other robot in the system. Formally, for every
$r_i\in \mathcal R_{NF}(T)$ and every $r_j\in \mathcal R\setminus\{r_i\}$,
\[
\operatorname{int}\!\left(\overline{r_i(T)r_j(T)}\right)
\cap
\left(\mathcal R(T)\setminus\{r_i(T),r_j(T)\}\right)
=\emptyset .
\]
Moreover, if $\ell_i$ is the total Euclidean distance traversed by $r_i$, then $\ell_i\leq B
\ \text{for every } r_i\in \mathcal R.$ Thus, unlike the standard fault-tolerant mutual visibility problem, a non-faulty robot must eventually see not only every non-faulty robot but also every mobility-faulty robot.

\begin{figure*}[ht]
    \centering

    \begin{subfigure}[t]{0.42\textwidth}
        \centering
        \includegraphics[width=0.75\linewidth]{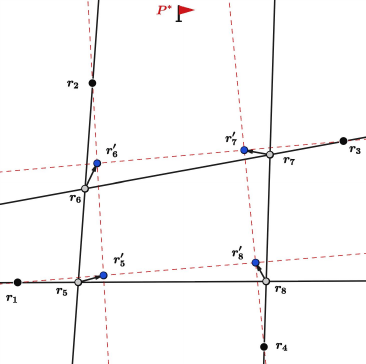}
        \caption{}
        \label{fig:cyclic-four}
    \end{subfigure}
      \hspace{0.03\textwidth}
    \begin{subfigure}[t]{0.42\textwidth}
        \centering
        \includegraphics[width=0.75\linewidth]{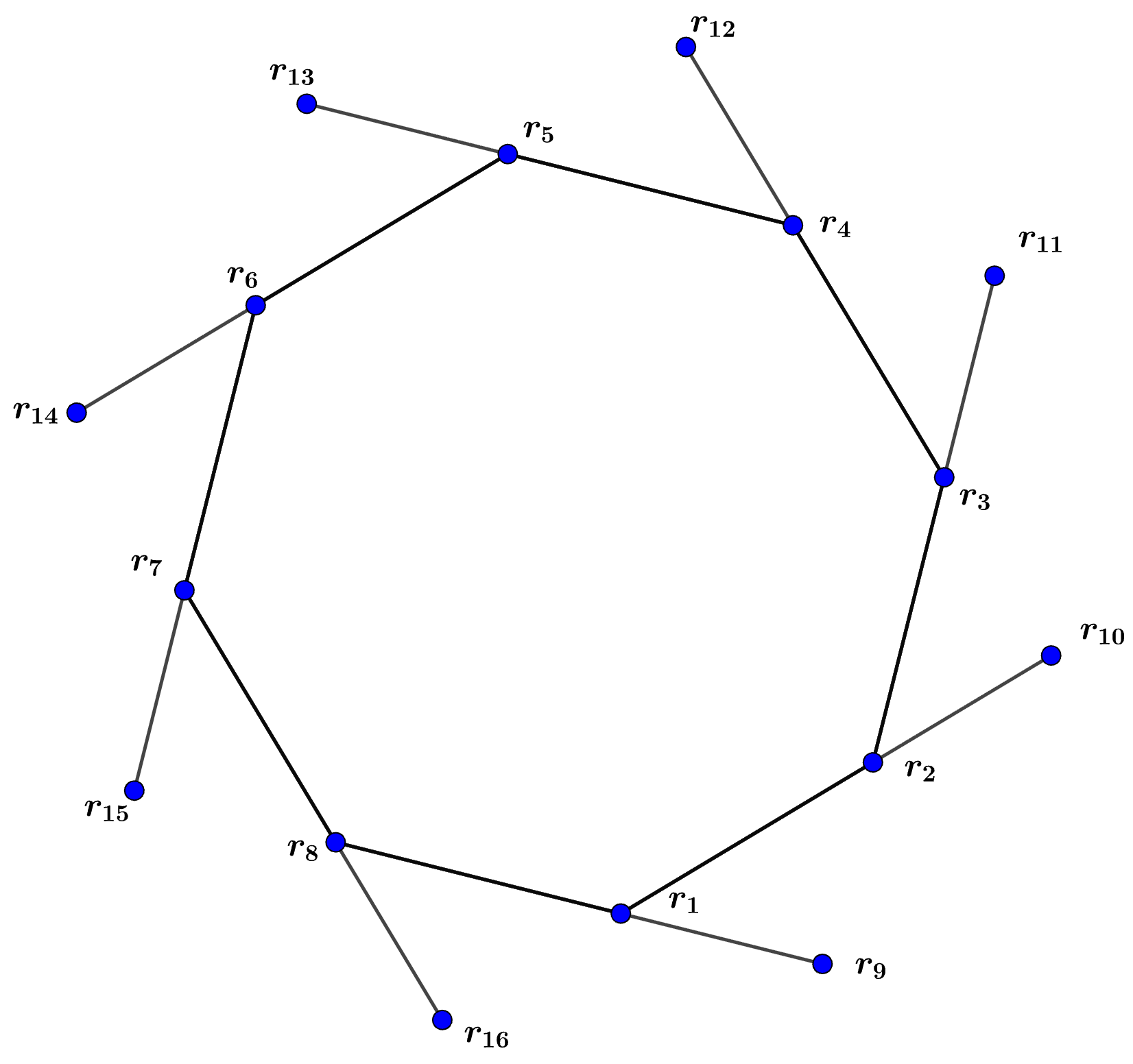}
        \caption{}
        \label{fig:cyclic-critical}
    \end{subfigure}

    \caption{Illustration of cyclic dependencies:
    (a) a cyclic configuration involving four non-terminal robots, where the black robots are faulty, and
    (b) a cyclic dependency involving multiple critical non-terminal
    robot positions.}
    \label{fig:cyclicdep}
\end{figure*}

\section{Algorithm Mutual-Visibility-Mobility-Fault()}

The algorithm consists of four logical stages encoded by the light colors.
These stages need not start or terminate simultaneously for all robots.
In particular, a robot never needs to infer that a phase has been completed
by all robots in the system. Instead, it advances according to its own light
state and the states of the visible robots on which its current movement
depends. Such robots are adjacent or friend robots and hence are visible by
definition. Therefore, all state transitions used by the algorithm are
locally detectable.

\subsection{ Computation of destination point}
\label{comnputation}
Before describing the four phases, we define the destination computed
by an eligible robot. Let $\operatorname{dest}_i(t)$ denote the
destination of robot $r_i$ at time $t$. Its destination is determined
by a movement direction $\operatorname{DIR}_i(t)$ and a displacement
$\sigma_i(t)$. The construction below guarantees that the movement
respects the remaining budget, avoids collisions, and prevents the
creation of collinearities among robots that were initially
non-collinear.
 We first define a few terms needed to compute the destination points of the robots. 
 \begin{itemize}
     \item Let $r_i(t)$ and $r_j(t)$ be two distinct robot positions in $\mathcal R(t)$. We denote the straight line joining these two robot positions by   $\mathcal L_{ij}(t)$. 
     \item Let $r_i(t)$, $r_j(t)$ and $r_k(t)$ be  three non-collinear robot positions in $\mathcal R(t)$.  We denote the {\it perpendicular distance} of the line $\mathcal L_{ij}(t)$ from the point $r_k(t)$ by $d^k_{ij}(t)$. 
     
\item
For every robot position $r_m(t)\in V_i(t)\setminus\{r_i(t)\}$,
let $L_m^*(t)$ denote the line passing through $r_m(t)$ and the
reference point $P^*$. Whenever $r_i(t)\notin L_m^*(t)$, define
\[
d^{P^*}_{i,m}(t)
=
\operatorname{dist}\bigl(r_i(t),L_m^*(t)\bigr).
\]

Suppose that $\mathcal{R}(t)\in\mathcal{R}_{NL}$. Define
\[
\mathcal{D}_i(t)
=
\left\{
d^{P^*}_{i,m}(t):
r_m(t)\in V_i(t)\setminus\{r_i(t)\},
\ r_i(t)\notin L_m^*(t)
\right\}
\]
\[
\qquad\cup
\left\{
d^k_{ij}(t),d^j_{ik}(t),d^i_{jk}(t):
r_j(t),r_k(t)\in V_i(t)\setminus\{r_i(t)\},
\ r_i(t),r_j(t),r_k(t)
\text{ are non-collinear}
\right\}.
\]

We define
\[
d_i(t)=
\begin{cases}
\min \mathcal{D}_i(t),
& \text{if }\mathcal{D}_i(t)\neq\emptyset,\\[1mm]
+\infty,
& \text{if }\mathcal{D}_i(t)=\emptyset.
\end{cases}
\]

Thus, whenever finite, $d_i(t)$ is the minimum perpendicular
clearance induced by the relevant lines considered in the
destination construction.



 
 \item Let $r_i(t)$ be a robot positions. Let $NT_i(t)$ be the set of non-terminal robot positions visible to
$r_i$ such that
\[
\begin{aligned}
NT_i(t)=\Bigl\{\, &
r_k(t)\in V_i(t)\setminus\{r_i(t)\}:\\
&
r_k(t)\text{ is non-terminal and}\\
&
P^*\notin L_{ik}(t)
\,\Bigr\}.
\end{aligned}
\]


For every $r_k(t)\in NT_i(t)$, the line $L_{ik}(t)$ joining
$r_i(t)$ and $r_k(t)$ divides the plane into two open half-planes.
Let $HP^{+}_{ik}(t)$ denote the open half-plane containing $P^*$.
 
 Let $HP^+_{ik}$  be the open half plane that contains $P^*$. Define the safe region of $r_i$ by
\[
\operatorname{Reg}_i(t)
=
\bigcap_{r_k(t)\in NT_i(t)} HP^+_{ik}(t).
\]
If $NT_i(t)=\emptyset$, we set
$\operatorname{Reg}_i(t)=\mathbb{R}^2$.

 For every $r_k(t)\in NT_i(t)$, the segment from $r_i(t)$ toward
$P^*$ locally enters $HP^+_{ik}(t)$; hence a direction toward $P^*$
is contained in the corresponding safe side.
 \end{itemize}


  \medskip
\paragraph{Direction of movement.}
We distinguish the direction used in Phase I from the radial
direction used in Phases II--IV.

For a Phase I movement, 
When $\mathcal{R}(t)\in\mathcal{R}_{NL}$, let $x_i$ be any point in
$\operatorname{Reg}_i(t)$ such that the segment
$\overline{r_i(t)x_i}$ does not lie on any line
$L_{ik}(t)$, where
\[
r_k(t)\in V_i(t)\setminus\{r_i(t)\}.
\] Define
\[
\mathbf{u}^{I}_i(t)
=
\frac{x_i-r_i(t)}
     {\|x_i-r_i(t)\|}.
\]

If $\mathcal{R}(t)\in\mathcal{R}_{L}$ and $\mathcal{L}_0$
denotes the current line containing all robot positions, let
$\mathcal{L}_0^\perp$ be the line perpendicular to
$\mathcal{L}_0$ at $r_i(t)$, and choose an arbitrary point
$y_i\in\mathcal{L}_0^\perp\setminus\{r_i(t)\}$. Define
\[
\mathbf{u}^{I}_i(t)
=
\frac{y_i-r_i(t)}
     {\|y_i-r_i(t)\|}.
\]

For every radial movement in Phases II--IV, define
\[
\rho_i(t)=\|r_i(t)-P^*\|
\]
and
\[
\mathbf{u}^{R}_i(t)
=
\frac{P^*-r_i(t)}
     {\rho_i(t)}.
\]
Thus, $\mathbf{u}^{R}_i(t)$ is the unit vector directed from
$r_i(t)$ towards the reference point $P^*$.
  
  
 \noindent \textbf{The amount of displacement:}  Let $D_i(t)$ denote the minimum distance of any two robot positions in $\{r_i(t), {\mathcal V}_i(t)\}$. Let $B_i(t)$ denote the remaining movement budget locally available to robot $r_i$ at time $t$.
 The amount of displacement of $r_i$ at time $t$ is denoted by $\sigma_i(t)$  and it is defined as follows,
 
For a Phase I movement,
\[
\sigma_i(t)=
\begin{cases}
\dfrac{1}{3^4}
\min\{d_i(t),D_i(t),B_i(t)\},
& \mathcal{R}(t)\in\mathcal{R}_{NL},\\[2mm]
\dfrac{1}{3^4}
\min\{D_i(t),B_i(t)\},
& \mathcal{R}(t)\in\mathcal{R}_{L}.
\end{cases}
\]
For a radial movement toward $P^*$ in Phases II--IV,
\[
\sigma_i(t)=
\begin{cases}
\dfrac{1}{3^4}
\min\{d_i(t),D_i(t),B_i(t),\rho_i(t)\},
& \mathcal{R}(t)\in\mathcal{R}_{NL},\\[2mm]
\dfrac{1}{3^4}
\min\{D_i(t),B_i(t),\rho_i(t)\},
& \mathcal{R}(t)\in\mathcal{R}_{L}.
\end{cases}
\]
The quantity $\sigma_i(t)$ is computed to keep the total movements of each robot within budget and also to avoid the creation of new collinearities with those robots which are not initially collinear.
For every radial movement,
$\sigma_i(t)<\rho_i(t)$; hence the computed destination lies strictly
in the interior of the segment joining $r_i(t)$ and $P^*$, and the
robot neither reaches nor crosses the reference point.
\medskip

\paragraph{Destination point.}
The destination of an eligible robot $r_i$ is defined according
to the phase in which it moves:
\[
\operatorname{dest}_i(t)=
\begin{cases}
r_i(t)+\sigma_i(t)\mathbf{u}^{I}_i(t),
& \text{in Phase I},\\[2mm]
r_i(t)+\sigma_i(t)\mathbf{u}^{R}_i(t),
& \text{in Phases II--IV}.
\end{cases}
\]
For a radial movement,
\[
\sigma_i(t)
\leq
\frac{\rho_i(t)}{3^4}
<
\rho_i(t).
\]
Hence,
\[
\operatorname{dest}_i(t)
\in
\operatorname{int}
\bigl(\overline{r_i(t)P^*}\bigr),
\]
so the robot moves strictly towards $P^*$ without reaching or
crossing the reference point.



\subsection{Algorithm Description}
 Initially, all the robots have the lights {\it off}. During {\it Phase I}, both terminal and non-terminal robots move to break initial collinearities. During {\it Phase II} and {\it Phase III}, only non-terminal robots move to break newly created collinearies, including the cyclic dependencies, and during {\it Phase IV}, only terminal robots move to break newly created collinearities with faulty non-terminal robots.
 
 
\noindent\textbf{Phase I (Both terminal and non-terminal robots move):} Let $r_i$, $r_j$, $r_k$, $r_l$ and $r_m$ be collinear robots on a line $\mathcal L$ such that robots $r_i$ and $r_m$ are terminal, other robots are non-terminal, $r_j$ is adjacent to $r_i$, and $r_l$ is adjacent to $r_m$. As the terminal robots move first, initially only $r_i$ and $r_m$ are eligible to move. Without loss of generality, let us consider the following order of activations of the robots: $r_i$ is activated first, followed by $r_j$, then $r_m$ is activated, followed by $r_l$, and lastly $r_k$ is activated. Upon activation, robot $r_i$ turns its light {\it moving-1} and moves. The {\it moving-1} keeps the information that robot $r_i$ has made its initial move. In the next activation, robot $r_i$ finds itself as the terminal robot with light {\it moving-1} and changes its light to {\it moved-T-1} to let others know that it is currently a terminal robot that has made its initial movement. Now, after the initial movement of $r_i$, robot $r_j$ may become terminal or may remain non-terminal (due to a fault in $r_i$). If $r_j$ becomes terminal, then it acts like $r_i$. Otherwise,  $r_j$ finds one of its friends  with light {\it move-T-1} ($r_i$ and $r_k$ are friends of $r_j$) has light {\it moved-T-1}. We allow robot $r_j$ to move while changing its light to {\it moving-1}. If it becomes terminal, it changes its light to {\it move-T-1}. Otherwise, it changes its light to {\it move-NT-1} to indicate that it is currently a non-terminal robot that has made its initial movement. Now suppose that both $r_j$ and $r_l$ are initially faulty. According to the activation schedule and moving rule, when $r_k$ becomes active, it finds both of its friends with light {\it moved-NT-1}, this let $r_k$ know that all the collinear robots on $\mathcal L$ have moved. We allow the robot $r_k$ to move like a non-terminal robot. Now, a non-terminal robot may find one of its friends as {\it moved-T-1} and another as {\it moved-NT-1}. In this case also, we allow the non-terminal robot with the light {\it off} to move. During Phase I, every eligible robot computes its destination according to Section~\ref{comnputation} and moves toward it. We summarize these movement strategies as follows:

 \medskip
 
 \noindent {\bf $\boldsymbol{r_i}$ is a physically terminal robot:} The robot $r_i$ finds itself as a physically terminal robot and
	\begin{enumerate}
		\item its light is {\it off}: it changes its light to {\it moving-1} and moves.
        \item its light is {\it moving-1}: it changes its light to  {\it moved-T-1} and does not move.
        
	\end{enumerate}
    
\medskip

 \noindent {\bf $\boldsymbol{r_i}$ is a physically non-terminal robot:} Robot $r_i$ finds itself as a physically non-terminal robot and
	\begin{enumerate}
		\item its light is {\it off},  and at least one friend robot in each friend-pair of $r_i$ has light {\it moved-T-1} or {\it moved-NT-1}: it changes its light to {\it moving-1} and moves.
		\item its light is {\it moving-1}: it changes its light to {\it moved-NT-1} and does not move.
	\end{enumerate}

After completing its Phase I action, every robot has light either
\textsc{moved-T-1} or \textsc{moved-NT-1}. A robot proceeds to its
next logical stage according to its own light state and the states of
the visible robots on which its subsequent action depends; no global
Phase I completion test is required.
The movements performed by the robots during Phase~I satisfy several useful properties. For ease of presentation, we first state these properties, without proofs, as observations to facilitate understanding of the algorithm.

\begin{lemma}
\label{l_2}
During the whole execution of Phase I, each non-faulty robot moves exactly once.
\end{lemma}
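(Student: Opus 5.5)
The plan is to prove two things separately: that no robot moves twice in Phase~I (an upper bound), and that every non-faulty robot eventually moves once (a lower bound / liveness). The upper bound comes from the light transitions. The lower bound needs an induction along each line of collinearity.

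\textbf{Upper bound.} Look at the Phase~I rules. A robot moves only when its light is \emph{off}. In that same activation it sets its light to \emph{moving-1}, and the rules never return a robot to \emph{off}. At its next activation, the only Phase~I rule that applies to a robot showing \emph{moving-1} is the one for its current status, terminal or non-terminal. That rule sets the light to \emph{moved-T-1} or \emph{moved-NT-1} and prescribes no movement. No Phase~I rule fires for a robot whose light is \emph{moved-T-1} or \emph{moved-NT-1}. Hence each robot performs at most one Phase~I move.

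Two points about this move need checking:
\begin{itemize}
\item Non-rigid interruption stops the robot somewhere in $[\delta,\|r_i-\operatorname{dest}_i\|]$. This still counts as a single move, and the light change happens in the Compute phase, so it has occurred regardless of where the robot stops.
\item The move is genuine. By the model, a non-faulty robot whose destination differs from its position travels a positive distance. It therefore suffices that $\sigma_i(t)>0$. This holds because $D_i(t)>0$ (positions are distinct), $B_i(t)=B>0$ (the robot has not yet moved), and $d_i(t)>0$ (every element of $\mathcal D_i(t)$ is a distance to a line not containing $r_i$). In the linear case the vector $\mathbf u^I_i$ is well defined. In the non-linear case we need some $x_i\in\operatorname{Reg}_i(t)$ off all the lines $L_{ik}$. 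Such a point exists because $\operatorname{Reg}_i(t)$ is a finite intersection of open half-planes whose closures contain $r_i$ on their boundary, and the direction toward $P^*$ enters all of them. So the region is a nonempty open cone at $r_i$, which finitely many lines cannot cover.
\end{itemize}

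\textbf{Lower bound, terminal robots.} Every robot that is terminal on all of its lines of collinearity (including robots on no such line) is immediately eligible while its light is \emph{off}. By fairness it is activated and moves.

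\textbf{Lower bound, non-terminal robots.} Fix a line $\mathcal L$ and order its robots $r_{(1)},\dots,r_{(p)}$. Induct from both ends inward. The claim is that once every robot between a non-terminal robot $r$ and one end of $\mathcal L$ has light in $\{\textit{moved-T-1},\textit{moved-NT-1}\}$ (whether or not it is faulty), the friend of $r$ on that side carries such a light. Faulty robots still run Look/Compute and update their lights under the same rules, so they pass through \emph{moving-1} to \emph{moved-$\ast$-1} even though they never move. This is what lets the wave propagate past faulty robots, exactly as in the $r_j,r_l$ example in the text.

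Two complications must be handled in this step:
\begin{itemize}
\item \emph{Friends change as robots move.} A friend of $r$ is defined as a visible robot on $\mathcal L$, so once a neighbor leaves $\mathcal L$ the next robot on $\mathcal L$ becomes the friend. By the induction hypothesis that robot already holds a moved light. I expect the movement rule (the bound $d_i/3^4$ and the region $\operatorname{Reg}_i$) to prevent any robot from entering $\mathcal L$ or creating a new line; I would cite the destination construction for this.
\item \emph{Junction robots.} A junction robot needs one moved friend in each friend-pair, one pair per line. I would run the induction simultaneously over all lines, using a potential such as the total number of \emph{off} robots.
\end{itemize}

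\textbf{Main obstacle: deadlock.} The hardest step is excluding a cyclic dependency in which every remaining \emph{off} robot waits on another \emph{off} robot. Each remaining robot lies on some line and is non-terminal there, so each of its lines has an \emph{off} robot strictly between it and each endpoint. The two terminal robots of every line are eligible and eventually carry moved lights. The plan is to show that on every line, the \emph{off} robot nearest an endpoint has, on that side, only robots with moved lights between it and the endpoint. Its friend-pair condition for that line is therefore satisfied, and since the condition asks for only one moved friend per pair, this holds on every line through it simultaneously. That robot is then eligible, which contradicts the deadlock.

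Combining the upper bound with the liveness argument gives that each non-faulty robot moves exactly once during Phase~I.
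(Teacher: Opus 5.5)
Your upper bound and your check that $\sigma_i(t)>0$ are fine, and your overall route is the paper's: light transitions give ``at most once,'' and inward propagation from terminal robots plus fairness gives ``at least once.'' The gap is in your deadlock step. This is exactly where propagation needs care, and the paper's one-line propagation argument also passes over it.

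\begin{itemize}
\item The claim that ``the two terminal robots of every line are eligible'' is false. A robot that is terminal on one line but non-terminal on another is physically non-terminal. The non-terminal rule then applies, and it must pass the friend-pair test on that other line.
\item The inference that the \emph{off} robot nearest an endpoint of one line passes the test ``on every line through it simultaneously'' does not follow, because the test is imposed separately on each friend pair. For instance, let $r$ be the middle robot of $a,r,b$, where $a,b$ are mobility-faulty and already show \emph{moved-T-1}. Let $r$ also be the middle robot of $c,d,r,e,f$, where $d,e$ are still \emph{off}. Then $r$ is the \emph{off} robot nearest an endpoint of the first line, yet it is not eligible: its friend pair $(d,e)$ on the second line contains no robot with a Phase~I completion light.
\end{itemize}
So your argument does not exhibit an eligible robot, and the decrease of your potential (the number of \emph{off} robots) is not established.

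The missing idea is a global choice of candidate.
\begin{itemize}
\item Robots with light \emph{off} have never moved, so they occupy distinct initial positions. Let $v$ be a vertex of the convex hull of these positions.
\item On any line through $v$, initial or newly created, no two \emph{off} robots lie on opposite sides of $v$. Otherwise $v$ would lie in the open segment between them and would not be extreme.
\item Hence in every friend pair of $v$, the friend on the \emph{off}-free side has already left \emph{off}. Its \emph{moving-1} light is transient: faulty robots update lights too, and each robot shows \emph{moving-1} at most once.
\item Once that light is updated, $v$ is eligible. Fairness then makes the set of \emph{off} robots shrink until it is empty.
\end{itemize}

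This argument also removes any need for your expectation that the destination construction prevents new lines. That expectation is in fact false: new three-robot lines do arise in Phase~I (Lemma~\ref{l_6}), e.g.\ when $P^*$ lies on an initial line, two robots can leave to opposite sides and straddle an unmoved one.

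Finally, your step that the friend ``carries such a light'' tacitly assumes it has not already advanced to a Phase~II or Phase~IV light. Phases are not globally synchronized, so you should either state this assumption or read the eligibility test as ``has left \emph{off} and \emph{moving-1}.'' The paper's proof leaves this implicit as well.
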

\begin{proof}
    A robot with light \textsc{off} performs at most one Phase I movement, because after moving it changes its light to \textsc{moving-1} and then to either \textsc{moved-T-1} or \textsc{moved-NT-1}. To show existence, observe that terminal robots are eligible immediately, and their completed movements make the adjacent non-terminal robots eligible. Repeating this argument propagates eligibility from the two ends of every initial line towards its interior. By fairness, every non-faulty robot is eventually activated when eligible and hence performs exactly one Phase I movement.
\end{proof} 

\begin{lemma}
\label{l_3}
During the whole execution of Phase I, at most two robots from the same initial line of collinearity move in a single round.
\end{lemma}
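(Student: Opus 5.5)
The plan is to read the claim as a statement about robots that are eligible to move simultaneously in Phase I. Fix an initial line of collinearity $\mathcal L\in\mathcal L_i^0$ with robots $r_{(1)},\dots,r_{(p)}$ in their order along $\mathcal L$. I will show an invariant: at every round, the robots of $\mathcal L$ whose light is \textsc{off} and which are eligible under the Phase I rules form at most two "fronts". These are the robot closest to each end of $\mathcal L$ among those still with light \textsc{off}.

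\textbf{Step 1 (the invariant).} Using Lemma~\ref{l_2}, every robot passes through \textsc{off} $\to$ \textsc{moving-1} $\to$ \textsc{moved-T-1}/\textsc{moved-NT-1} exactly once. I will argue by induction on rounds that the robots of $\mathcal L$ that have left \textsc{off} form a prefix $r_{(1)},\dots,r_{(a)}$ and a suffix $r_{(b)},\dots,r_{(p)}$ of the initial order, with $a<b$ or covering everything.

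\emph{Base.} Initially only the two physically terminal robots $r_{(1)},r_{(p)}$ are eligible, since a non-terminal robot needs a friend with a \textsc{moved} light.

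\emph{Inductive step.} Consider a non-terminal robot $r_{(s)}$ with $a+1<s<b-1$. Its friends on $\mathcal L$ are its current visible neighbours along $\mathcal L$. Both of these still have light \textsc{off} (or lie strictly inside the \textsc{off} block). So the eligibility condition "at least one friend in each friend-pair has \textsc{moved-T-1} or \textsc{moved-NT-1}" fails for the pair on $\mathcal L$, and $r_{(s)}$ cannot move. Only $r_{(a+1)}$ and $r_{(b-1)}$ can be eligible, so at most two robots of $\mathcal L$ move in a round. When one of them moves, the prefix or suffix grows by one, which preserves the invariant.

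\textbf{Step 2 (main obstacle).} After robots start moving, the geometric line $\mathcal L$ breaks up. The friends of an unmoved robot are then computed on the current lines of collinearity, not on $\mathcal L$. I must show that an unmoved interior robot $r_{(s)}$ still has, as a friend-pair, its two unmoved neighbours on $\mathcal L$ (both \textsc{off}).

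\emph{Why the neighbours stay put.} Robots with light \textsc{off} have not moved, so $r_{(s-1)},r_{(s)},r_{(s+1)}$ still occupy their initial collinear positions. The segments between them contain no other robot, because moved robots left $\mathcal L$. I would justify this with the Phase I direction $\mathbf u^I_i$, which is chosen not along any line $L_{ik}$, together with the bound $\sigma_i\le d_i/3^4$ preventing re-entry into $\mathcal L$. Hence $r_{(s-1)},r_{(s+1)}$ are visible friends of $r_{(s)}$ on $\mathcal L$, and the eligibility test fails for that pair.

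\emph{Faulty robots.} A robot at a front may also be faulty. It then turns \textsc{moving-1} and then \textsc{moved} without displacement. This is harmless, since it still advances the front by one, exactly as in the example in the algorithm description.

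\textbf{Step 3.} Conclude: in any round, the robots of $\mathcal L$ that move are among the two front robots $r_{(a+1)},r_{(b-1)}$, giving at most two. When $a+1=b-1$, the two fronts coincide and there is only one.
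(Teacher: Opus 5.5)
Your proposal is correct and follows the same route as the paper: Phase~I eligibility propagates inward from the two terminal ends of the initial line, so in any round only the two current front robots (the first \textsc{off} robot from each end) can make their Phase~I move. Your explicit prefix/suffix invariant, together with the check that unmoved \textsc{off} robots remain adjacent friends on the line, makes rigorous what the paper's short proof asserts when it says eligibility cannot advance by more than one position per round.
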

 \begin{proof}
   According to movement strategies of Phase I, on any line of collinearity, eligibility propagates from its two terminal ends towards the interior. Initially, only the two terminal robots can move. After a robot completes its Phase I movement, at most its next non-terminal neighbor on that side can become eligible. Since all robots activated in an $\mathsf{SSYNC}$ round compute from the same pre-movement configuration, eligibility cannot propagate by more than one position from either end in the same round. Hence, at most two robots from a line of collinearity move in any round. 
 \end{proof}

\begin{lemma}
\label{l_4}
Let $r_i,r_j,r_k$ be three robots that are non-collinear in the
initial configuration. Then they remain non-collinear throughout
Phase~I.
\end{lemma}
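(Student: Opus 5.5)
We argue by induction on the rounds of Phase~I. The invariant is that every triple of robots that is non-collinear at time $0$ is still non-collinear at the start of each round. Consider a round $t$, let $r_i,r_j,r_k$ be initially non-collinear, and let $A\subseteq\{r_i,r_j,r_k\}$ be the set of robots among them that actually move in this round. Faulty or inactive robots do not move, and non-rigid stops only shorten a displacement. Hence each moving robot $r_a$ ends at $r_a(t)+s_a\mathbf u^{I}_a(t)$ with $0<s_a\le\sigma_a(t)\le d_a(t)/3^4$. The proof splits into cases according to $|A|$.

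\textbf{Case $|A|=0$ or $|A|=1$.} If $A=\emptyset$ there is nothing to show. Suppose $A=\{r_i\}$ and $r_j,r_k$ are static. By induction $r_i(t)$, $r_j(t)$, $r_k(t)$ are non-collinear, so the perpendicular distance $d^i_{jk}(t)$ is positive. We want it to be a candidate in $\mathcal D_i(t)$, which requires $r_j,r_k\in V_i(t)$.

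This is the main obstacle, because visibility may be obstructed. Suppose $r_j$ is hidden from $r_i$ by some $r_m$ on $\overline{r_ir_j}$. Then the line through $r_i$ parallel to $L_{jk}$ is still at distance $d^i_{jk}$ from $L_{jk}$. I would replace $r_j$ by the nearest visible robot on the segment $\overline{r_i(t)r_j(t)}$; call it $r_m$. The triangle $r_i r_m r_k$ is a scaled-down copy toward $r_i$, and its height from $r_i$ onto $L_{mk}$ is a candidate distance. Bounding the ratio carefully, I would argue that $d^i_{mk}$, combined with $D_i(t)$, controls how far $r_i$ must travel before it can reach $L_{jk}$. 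If that comparison fails, I would fall back on the mechanism of the algorithm instead: the triple need only avoid becoming collinear, and $r_i$ reaches $L_{jk}$ only after travelling at least $d^i_{jk}>\sigma_i(t)$ once $d_i(t)\le d^i_{jk}$ is shown. Once $d^i_{jk}$ (or a proxy) bounds $d_i(t)$, we get $s_i\le d^i_{jk}/81<d^i_{jk}$, so $r_i$ cannot land on $L_{jk}(t)$.

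\textbf{Case $|A|\ge2$.} Here we use Lemma~\ref{l_3} together with the estimate above. Each moving robot travels at most $1/81$ of its clearance to the lines spanned by the other two. The area of the triangle $r_ir_jr_k$ is a continuous function of the three positions. Write the height of the triangle as a base times height, and bound its change by triangle-inequality perturbation. Each vertex is moved by at most $h/81$, where $h$ is the minimum of the three heights (all in $\mathcal D$), so each height shrinks by at most about $3h/81$. The height therefore stays at least $h(1-3/81)>0$. The factor $3^{-4}$ is what absorbs the simultaneous perturbation of up to three vertices; this is where I would spend the routine calculation.

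Finally, the direction rule guarantees that a moving robot's segment does not lie on any line $L_{ik}$. This excludes the degenerate case of sliding along a line of collinearity, which would otherwise make the argument for the triple $(r_i,r_j,r_k)$ vacuous in the limit where the relevant height is zero. The induction therefore closes for every round of Phase~I.
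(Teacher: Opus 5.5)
\textbf{Gap: the obstructed-visibility case.} Your skeleton matches the paper's: induction on rounds, each displacement at most $d_i(t)/3^4$, and a comparison with the altitudes of $\triangle r_ir_jr_k$. But the step you yourself call the main obstacle is left open, and it is not a corner case. Phase~I starts from a configuration with lines of collinearity, so $r_j$ or $r_k$ is routinely hidden from $r_i$, and then the altitudes of $\triangle r_ir_jr_k$ are not in $\mathcal D_i(t)$. Your case $|A|\ge 2$ simply assumes they are (``all in $\mathcal D$''), so the gap carries over there.

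The proxy you suggest does not work. Take $r_i=(0,0)$, $r_m=(1,0)$, $r_j=(2,0)$, $r_k=(1+\eta,\eta^2)$, so that only $r_m$ and $r_k$ are visible to $r_i$. Then $d^i_{mk}\approx\eta$ and $D_i(t)\approx\eta$, while $d^i_{jk}\approx 2\eta^2$. For small $\eta$, a displacement of $\min\{d^i_{mk},D_i(t)\}/3^4$ therefore exceeds $d^i_{jk}$. Your fallback, ``once $d_i(t)\le d^i_{jk}$ is shown'', just restates what has to be proved.

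The missing idea is to compare the \emph{minimum} altitudes of the original triangle and of the visible triangle. Let $\widehat r_j,\widehat r_k$ be the first visible robots on $\overline{r_ir_j}$ and $\overline{r_ir_k}$, with $\widehat r_j-r_i=\alpha(r_j-r_i)$ and $\widehat r_k-r_i=\beta(r_k-r_i)$. Let $\triangle r_ir_jr_k$ have area $S$, longest side $M$ and minimum altitude $h=2S/M$. Then $\triangle r_i\widehat r_j\widehat r_k$ has area $\alpha\beta S$, and its longest side $\widehat M$ satisfies $\alpha\beta M\le\alpha|r_ir_j|+\beta|r_ir_k|\le 2\widehat M$. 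Hence its minimum altitude satisfies $\widehat h\le 2h$. All three of its altitudes lie in $\mathcal D_i(t)$, so $d_i(t)\le 2h$, and every moving robot of the triple is displaced by at most $2h/3^4<h/2$. In the example above, the altitude doing the work is the distance $\eta^2$ from $r_k$ to $L_{im}$, not $d^i_{mk}$.

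\textbf{Secondary issue: simultaneous moves.} Your estimate for simultaneous moves is misstated, though this matters less. Moving the endpoints of a short base rotates its line, and the resulting shift at a distant opposite vertex is amplified by the ratio of that distance to the base length. So a non-minimal height can drop by far more than $3h/81$. The conclusion you want is still true. For instance, perturb $|(v_2-v_1)\times(v_3-v_1)|=2S=hM$: it changes by only $O(\epsilon M)$ when each vertex moves at most $\epsilon$. The paper's route is cleaner. If each of the three robots moves less than $h/2$ and they end collinear on a line $\ell$, then the pre-round triangle lies in a strip of width $<h$ about $\ell$. This contradicts the fact that a triangle's minimum width equals its minimum altitude $h$. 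That argument handles every subset of moving robots at once, so your split on $|A|$, the appeal to Lemma~\ref{l_3}, and the remark about the direction rule are not needed.
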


\begin{proof}
We prove the claim by induction over the rounds of Phase~I.
Suppose that $r_i,r_j,r_k$ are non-collinear immediately before
some round, and let $h>0$ denote the minimum altitude of the
triangle $ \triangle r_i r_j r_k$ at that time. We first show that every robot among
$r_i,r_j,r_k$ that moves in this round is displaced by less than
$h/2$.

Consider robot $r_i$. If both $r_j$ and $r_k$ are visible to
$r_i$, then the three altitudes of
$\triangle r_i r_j r_k$ are among the distances considered in the
computation of $d_i(t)$. Hence, $ d_i(t)\leq h .$ Since the configuration is non-linear, the destination construction
gives
\[
        \sigma_i(t)
        \leq \frac{d_i(t)}{3^4}
        \leq \frac{h}{3^4}
        <\frac{h}{2}.
\]

Now suppose that one or both of $r_j,r_k$ are not visible to $r_i$.
Let $\widehat r_j$ and $\widehat r_k$ be the first visible robots on
the segments $\overline{r_i r_j}$ and $\overline{r_i r_k}$,
respectively; if $r_j$ (respectively, $r_k$) is visible, take
$\widehat r_j=r_j$ (respectively, $\widehat r_k=r_k$).
There exist $\alpha,\beta\in(0,1]$ such that
\[
 \widehat r_j-r_i=\alpha(r_j-r_i),
 \qquad
 \widehat r_k-r_i=\beta(r_k-r_i).
\]

Let $A$ be the area of $\triangle r_i r_j r_k$, and let $M=\max\{
       |r_i r_j|,\,
       |r_i r_k|,\,
       |r_j r_k|
      \}.$ Then $ h=\frac{2A}{M}.$ Let $\widehat h$ be the minimum altitude of
$\triangle r_i\widehat r_j\widehat r_k$, and let
$\widehat M$ be the length of its longest side. Since its area is
$\alpha\beta A$, $\widehat h=\frac{2\alpha\beta A}{\widehat M}.$
Furthermore,
\[
\begin{aligned}
 \alpha\beta M
 &\leq
 \alpha\beta\bigl(|r_i r_j|+|r_i r_k|\bigr)\\
 &\leq
 \alpha |r_i r_j|+\beta |r_i r_k|
 \leq 2\widehat M .
\end{aligned}
\]
Therefore, $ \widehat h\leq 2h.$ Since $\widehat r_j$ and $\widehat r_k$ are visible to $r_i$,
the altitudes of
$\triangle r_i\widehat r_j\widehat r_k$ are considered in the
computation of $d_i(t)$. Thus, $ d_i(t)\leq\widehat h\leq2h,$ and consequently
\[
        \sigma_i(t)
        \leq\frac{d_i(t)}{3^4}
        \leq\frac{2h}{3^4}
        <\frac{h}{2}.
\]
The same argument applies to $r_j$ and $r_k$. A robot that is not
activated or is mobility-faulty has displacement zero, and a
non-rigid interruption can only decrease the displacement.
Therefore, irrespective of which subset of the three robots is
activated in the round, each of them is displaced by less than
$h/2$.

Suppose, for contradiction, that after these simultaneous
movements the three robots become collinear on a line $\ell$.
Since the new position of each robot lies on $\ell$ and its
displacement is less than $h/2$, each of the three positions before
the round is at perpendicular distance less than $h/2$ from
$\ell$. Hence the entire triangle
$\triangle r_i r_j r_k$ before the round is contained in an open
strip of width strictly smaller than $h$ whose axis is $\ell$.
This is impossible: the minimum width of a triangle is equal to its
minimum altitude, which is $h$. Thus $r_i,r_j,r_k$ remain
non-collinear after the round.
Since they are non-collinear initially, induction over all Phase I
rounds proves that they remain non-collinear throughout Phase~I.
\end{proof}

\begin{lemma}
\label{l_5}
If an initial line of collinearity persists at the end of
Phase~I, then every robot lying on that line is mobility-faulty.
\end{lemma}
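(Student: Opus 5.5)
The plan is to combine Lemma~\ref{l_2} with a geometric property of the Phase~I movement direction: every Phase~I move takes a robot strictly off the geometric line it started on. Let $\mathcal L$ be an initial line of collinearity, viewed as a fixed geometric line in the plane. I interpret ``$\mathcal L$ persists at the end of Phase~I'' to mean that at that time $\mathcal L$ still contains (at least three of) the robots that were initially on it. The goal is to show that every robot still on $\mathcal L$ never moved during Phase~I. Lemma~\ref{l_2} then forces it to be mobility-faulty, since every non-faulty robot makes exactly one Phase~I move.

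\textbf{Step 1 (a Phase I move leaves $\mathcal L$).} Let $r_i$ lie on $\mathcal L$ just before the round in which it makes its unique Phase~I move. First I will argue that $r_i$ has at least one visible robot $r_k$ on $\mathcal L$. Before any robot of $\mathcal L$ has moved, there are at least three robots on $\mathcal L$, and the robot of $\mathcal L$ nearest to $r_i$ on either side is adjacent to $r_i$, hence visible. If some robots of $\mathcal L$ have already left, the ones remaining on $\mathcal L$ are exactly those that have not yet moved. Here I need a small side argument: $r_i$ is eligible only if it is terminal, or a friend of it on $\mathcal L$ carries a \emph{moved} light. In the second case I take that friend as $r_k$; in the first case I use the adjacent robot on $\mathcal L$. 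With such an $r_k$ fixed, $\mathcal L = L_{ik}(t)$. The construction of $\mathbf u^I_i(t)$ then splits into two cases:
\begin{itemize}
\item In the non-linear case, the segment $\overline{r_i(t)x_i}$ is required not to lie on any $L_{ik}(t)$ with $r_k$ visible, so the direction is transverse to $\mathcal L$.
\item In the linear case, the direction is perpendicular to the common line, which is $\mathcal L$ itself.
\end{itemize}

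\textbf{Step 2 (the move has positive length).} Next I check that $\sigma_i(t)>0$. This holds because $d_i(t)$, $D_i(t)$ and $B_i(t)$ are positive: robot positions are distinct, and the budget is not exhausted since, by Lemma~\ref{l_2}, only one Phase~I move has been made, each of length at most $B/3^4$. Under the non-rigid model, a non-faulty robot then travels at least $\min\{\delta,\sigma_i(t)\}>0$ along a direction transverse to $\mathcal L$, so its new position lies off $\mathcal L$. The robots activated in the same round compute from the same snapshot, so the argument is unaffected by simultaneity. Also, $r_i$'s off-line position is determined solely by its own displacement, so others moving cannot put it back on $\mathcal L$.

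\textbf{Step 3 (conclusion).} By Lemma~\ref{l_2}, $r_i$ never moves again in Phase~I, so it stays off $\mathcal L$ until the end of Phase~I. A robot that moved and afterwards became faulty is likewise off $\mathcal L$. Hence every robot on $\mathcal L$ at the end of Phase~I never performed its Phase~I move. By Lemma~\ref{l_2} (together with fairness, which eventually makes every non-faulty robot eligible and activated), such a robot must be mobility-faulty.

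\textbf{Main obstacle.} I expect the delicate point to be Step~1's claim that the moving robot sees a robot on $\mathcal L$ when it moves, so that the constraint on $x_i$ genuinely excludes motion along $\mathcal L$. A related subtlety is making precise that ``persists'' refers to the original line $\mathcal L$, not to a new line through displaced robots; such newly created alignments are the business of Phases~II--IV. I would handle this by fixing the interpretation explicitly at the start and using the friend/adjacency structure of the eligibility rules.
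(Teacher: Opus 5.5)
Your Steps 1 and 2 are correct and match the first half of the paper's proof. A non-faulty robot that is on $\mathcal L$ when it makes its unique Phase~I move leaves $\mathcal L$ in a transverse direction with positive displacement. By Lemma~\ref{l_2}, it never returns during Phase~I.

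The gap is in Step~3, where you conclude that \emph{every} robot on $\mathcal L$ at the end of Phase~I never performed its Phase~I move. Your argument covers only robots that were on $\mathcal L$ when they moved. It says nothing about a non-faulty robot $r_m$ that starts \emph{off} $\mathcal L$ and whose single Phase~I move lands it on $\mathcal L$. Such a robot has moved, is non-faulty, and would lie on the persisting line, so the statement would fail. The direction rule does not exclude this: the constraint on $x_m$ and the region $\operatorname{Reg}_m(t)$ involve only lines through $r_m(t)$, and $\mathcal L$ does not pass through $r_m(t)$. The case also matters downstream. The lemma is invoked later (for instance in the proof of Lemma~\ref{l_9}) for non-faulty robots merely found on the line at the end of Phase~I, with no information about where they started.

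The missing ingredient is the non-collinearity preservation of Lemma~\ref{l_4}. This is exactly the second half of the paper's proof: ``no robot initially outside the line can move onto it and become collinear with two robots initially on the line.'' The paper attributes this to Lemma~\ref{l_3}, but the fact that actually delivers it is Lemma~\ref{l_4}. Under your reading of persistence, fix two robots $r_a,r_b$ that were initially on $\mathcal L$ and are still on it at the end of Phase~I. For any $r_m$ initially off $\mathcal L$, the triple $r_a,r_b,r_m$ is initially non-collinear. By Lemma~\ref{l_4} it stays non-collinear throughout Phase~I, so $r_m$ is not on the line through $r_a$ and $r_b$, which is $\mathcal L$. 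Thus it is the displacement bound through $d_i(t)$, not the choice of direction, that keeps outside robots off $\mathcal L$. With this step added, your argument is complete.

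Your concern in Step~1 can be settled without the eligibility rules. If any other robot lies on $\mathcal L$, the one nearest to $r_i$ along $\mathcal L$ is visible. If none does, fewer than three original robots can remain on $\mathcal L$, so persistence fails anyway.
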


\begin{proof}
  By Lemma~\ref{l_2}, every non-faulty robot initially on the line performs a positive movement in a direction different from the line and hence
leaves it. Moreover, by Lemma~\ref{l_3}, no robot initially outside the line
can move onto it and become collinear with two robots initially on the
line. Thus, any robot lying on an initial line at the end of Phase~I
must be mobility-faulty. 
\end{proof}

\begin{lemma}
\label{l_6}
During Phase~I, every newly created line of collinearity contains
exactly three robot positions, all of which were collinear in the
initial configuration. Moreover, at least one of these robots is
terminal.
\end{lemma}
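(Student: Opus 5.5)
The plan is to combine Lemma~\ref{l_4} with a short analysis of how motion propagates along an initial line. The first part of the statement, that every newly created line contains only robots that were initially collinear, follows from Lemma~\ref{l_4}. Suppose a new line of collinearity $\ell$ appears during Phase~I. Any three robots on $\ell$ are collinear at that moment, so by Lemma~\ref{l_4} they must have been collinear initially. Hence any triple drawn from the robots on $\ell$ was initially collinear. With at least three robots this forces all of them to lie on a single initial line $\mathcal L\in\mathcal L_i^0$: any two triples sharing two robots determine the same initial line. This reduces the problem to robots of a single initial line $\mathcal L$, and $\ell\neq\mathcal L$.

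To bound the number of robots on $\ell$ by three, I would use the propagation order of Phase~I together with Lemma~\ref{l_3}. A robot of $\mathcal L$ that has not yet moved still lies on $\mathcal L$. Since $\ell\neq\mathcal L$, the two lines meet in at most one point, so at most one unmoved robot of $\mathcal L$ can lie on $\ell$. The remaining robots of $\ell$ must have moved off $\mathcal L$. Each of them moved by at most $\sigma\le d_i/3^4$ in a direction not along any visible line through it. The destination rule additionally keeps the robot inside $\operatorname{Reg}_i$, the side of the lines to its still-non-terminal neighbours that contains $P^*$.

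Eligibility spreads inward from the two ends of $\mathcal L$ one robot at a time per side. So the moved robots form two contiguous end-blocks of $\mathcal L$, and at any time each block has at most one robot in motion. I would then argue that two moved robots from the same block cannot be collinear with a third robot of $\mathcal L$ unless that configuration was created in a single round. Within one block, each robot moved only after its outer neighbour had finished, using a clearance $d_i$ that already counts lines through that neighbour. A careful version of the strip argument from Lemma~\ref{l_4} then excludes a later robot landing on a line through two already moved robots, because that line was visible and entered $\mathcal D_i$. Consequently $\ell$ contains at most one moved robot from each block plus at most one unmoved robot. Moreover, four robots on $\ell$ would force two moved robots from one block, which is excluded. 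So $\ell$ contains exactly three robots.

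For the terminality claim, I would observe that among the three robots on $\ell$, the two extreme ones along $\ell$ are terminal with respect to $\ell$ by definition. So the statement holds trivially if "terminal" is read relative to $\ell$. If it is meant relative to $\mathcal L$ (initially terminal), the block structure gives it directly: one of the three must be a robot that moved first on its side, namely an initial terminal of $\mathcal L$. If not, all three would lie strictly inside, and the outer terminal's earlier move would have been included in the clearance computations.

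The main obstacle I expect is the within-block exclusion step. Visibility along the moved block may be partial, and the displacement bound only controls the robot's own clearance. I would try first to reuse the altitude-scaling inequality $\widehat h\le 2h$ from Lemma~\ref{l_4}, applied to triangles involving the nearest visible robots.
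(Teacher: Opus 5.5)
Your first step (every triple on $\ell$ was initially collinear by Lemma~\ref{l_4}, so all robots of $\ell$ come from one initial line $\mathcal L$) is correct and matches the paper's. The gap is in the count. You classify the robots of $\ell$ by whether they have \emph{ever} moved. That leaves you needing ``at most one moved robot per end-block'', which is exactly the step you leave unproved. The block picture is also inaccurate: a faulty robot that has executed its Phase~I step stays on $\mathcal L$ and the propagation passes over it, so the robots that have left $\mathcal L$ need not form end-blocks. The missing idea is to work inside the single round $t$ in which $\ell$ appears, and to classify robots by whether they move \emph{in that round}. The strip argument in the inductive step of Lemma~\ref{l_4} uses only that the triple is non-collinear immediately before the round. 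Hence any triple collinear after round $t$ was already collinear before it, and all robots of $\ell$ lay on one line $\mathcal L'\neq\ell$ just before round $t$. Every robot that does not move in round $t$, including one that moved earlier, is still on $\mathcal L'$, so at most one such robot lies on $\ell$. Since all robots of $\ell$ come from $\mathcal L$, Lemma~\ref{l_3} allows at most two of them to move in round $t$. Thus $\ell$ has at most three robots, with no analysis of earlier moves. This is the paper's argument: at most two movers per round, and a mover cannot land on a line determined by two stationary robots.

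The terminality claim has a second problem. The reading ``terminal on $\ell$'' is vacuous, and it is not how the lemma is used: Lemma~\ref{l_8} invokes it for a three-robot line whose two endpoints are assumed non-terminal. Your other reading, that one of the three robots is an initial terminal of $\mathcal L$, is false. Take $r_1,\dots,r_7$ on $\mathcal L$, with $P^*$ on the extension of $\mathcal L$ outside the segment.
In round~1, $r_1$ and $r_7$ leave $\mathcal L$ to the same side.
In round~2, $r_2$ and $r_6$ are terminal. Since $P^*\in\mathcal L$, their neighbours $r_3,r_5$ are excluded from $NT_2$ and $NT_6$. So $\operatorname{Reg}_2=\operatorname{Reg}_6=\mathbb R^2$, and $r_2,r_6$ may move perpendicular to $\mathcal L$ to opposite sides.
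If their executed displacements are equal (non-rigid stops allow this), then $r_2'$, $r_4$, $r_6'$ lie on a new line, and none of them was initially terminal.
Your clearance argument cannot exclude this. Each $d_i$ only bounds a robot's own displacement against lines through robots it currently sees, whereas the new line is produced by two simultaneous movers. The paper takes the terminal robot from the current frontier movers that create $\ell$ (here $r_2$ and $r_6$, terminal on what remains of $\mathcal L$), not from the initial terminals.
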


\begin{proof}
Let $\mathcal L'$ be a line of collinearity created during Phase~I. By Lemma~\ref{l_4}, robots that were not collinear in the initial configuration cannot become collinear during Phase~I. Hence, all robots on $\mathcal L'$ originate from the same initial line of collinearity $\mathcal L$.
By Lemma~\ref{l_3}, at most two robots of $\mathcal L$ can move in the
same round. Moreover, the destination construction prevents a
moving robot from reaching a line already determined by two
stationary robots. Therefore, $\mathcal L'$ cannot contain more
than three robot positions; since it is a line of collinearity, it
contains exactly three.

Finally, Phase I eligibility propagates from the terminal ends of
$\mathcal L$ towards its interior. Thus, any new collinearity is
created by the current frontier movement(s) together with a robot
already exposed on one of these fronts. By the Phase I movement
order, at least one of the three robots is terminal. Moreover, if all robots of the newly created collinearity are
non-faulty, then at least one of them belongs to the unprocessed
part of the corresponding propagation front and has not yet
performed its Phase I movement.

\end{proof}

\begin{lemma}
\label{l_7}
Every newly created line of collinearity that persists at the end of
Phase I contains at least one mobility-faulty robot.
\end{lemma}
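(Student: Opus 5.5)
We argue by contradiction, combining Lemma~\ref{l_6} with Lemma~\ref{l_2}. Let $\mathcal L'$ be a line of collinearity created during Phase~I that persists at the end of Phase~I, and suppose that all robots on $\mathcal L'$ are non-faulty. By Lemma~\ref{l_6}, $\mathcal L'$ contains exactly three robot positions, say $r_a,r_b,r_c$. These robots lie on a common initial line $\mathcal L$, and at least one of them is terminal. Lemma~\ref{l_6} also gives that, when all three robots are non-faulty, at least one of them (say $r_c$) belongs to the unprocessed part of the propagation front of $\mathcal L$ and has not yet made its Phase~I move at the moment $\mathcal L'$ is created.

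Next we invoke Lemma~\ref{l_2}. Since $r_c$ is non-faulty, it eventually performs exactly one Phase~I movement. We must check that this movement destroys $\mathcal L'$. If $r_c$ is non-terminal on $\mathcal L'$, its safe-region or terminal rule makes it eligible. If it is terminal, it is eligible immediately. In either case its direction $\mathbf u^I_c$ is chosen so that the segment $\overline{r_c x_c}$ does not lie on any line $L_{ck}$ with $r_k$ visible; in particular it does not lie on $\mathcal L'$, since at least one of $r_a,r_b$ is visible to $r_c$ along $\mathcal L'$. Because the displacement is strictly positive (non-rigid motion still gives at least $\min\{\delta,\sigma_c\}>0$), $r_c$ leaves $\mathcal L'$.

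It then remains to rule out that $r_a,r_b,r_c$ become collinear again on the same line $\mathcal L'$ after this move. Such a recreation would require a later movement of $r_a$ or $r_b$ that lands exactly back on the line through the other two. By Lemma~\ref{l_2}, $r_a$ and $r_b$ move at most once in Phase~I. We then consider two cases.
\begin{enumerate}
\item If their moves already happened, nothing can bring $\mathcal L'$ back.
\item If a move of $r_a$ or $r_b$ is still pending, any new line it creates is a \emph{different} line, which is handled as another newly created line by the same argument. The same applies to the pending move of $r_c$ itself.
\end{enumerate}
Thus $\mathcal L'$ itself cannot persist. Iterating over the finitely many Phase~I moves, every line of collinearity present at the end of Phase~I that was created during Phase~I must contain a robot whose pending move never occurred. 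By Lemma~\ref{l_2}, such a robot is mobility-faulty.

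The main obstacle is the second step: showing rigorously that the pending robot's single Phase~I move actually removes it from $\mathcal L'$. This step has two parts. First, the direction choice must exclude $\mathcal L'$, which needs a robot of $\mathcal L'$ to be visible to the mover; this holds because its neighbours on $\mathcal L'$ are visible by adjacency. Second, a simultaneous move of another robot in the same round must not realign the three points. For the latter I would reuse the strip argument of Lemma~\ref{l_4}, applied to the post-move triple: the displacements are at most $d_i/3^4$, which prevents exact realignment onto the same line. If that fails, I would fall back on Lemma~\ref{l_3}, which limits the simultaneous movers on $\mathcal L$ to two, and check the two-mover case directly.
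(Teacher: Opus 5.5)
Your proposal is correct and follows essentially the paper's own argument. You take the unprocessed robot supplied by (the proof of) Lemma~\ref{l_6}, use Lemma~\ref{l_2} and fairness to guarantee its positive Phase~I move, and use the destination construction (its direction avoids every line $L_{ck}$ through a visible robot) to conclude that it leaves $\mathcal L'$; your handling of possible re-creation by iterating over the finitely many Phase~I moves is a refinement the paper omits. The ``main obstacle'' you flag is not really one: once $r_c$ moves a positive distance off $\mathcal L'$, it stays off $\mathcal L'$ whatever else moves in that round, and any realignment on a different line is already covered by your iteration. This is fortunate, because the strip argument of Lemma~\ref{l_4} gives nothing for a triple that is collinear before the round (its minimum altitude is $0$).
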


\begin{proof}
Let $\mathcal L'$ be a newly created line of collinearity that persists at the
end of Phase I. By Lemma~\ref{l_6}, $\mathcal L'$ contains exactly three robots
originating from the same initial line of collinearity $\mathcal L$, and at
least one of them is terminal.

Suppose, for contradiction, that all three robots on $\mathcal L'$ are
non-faulty. Phase I eligibility on $\mathcal L$ propagates from its terminal
ends towards the interior. Hence, when an all-non-faulty new
collinearity $\mathcal L'$ is created by the current frontier movement(s), at
least one robot $r$ on $\mathcal L'$ belongs to the unprocessed part of the
corresponding frontier and has not yet performed its Phase I
movement. The robots preceding $r$ on that frontier eventually
display a Phase I completion color. Consequently, $r$ either becomes
terminal or satisfies the friend-pair eligibility condition.

Since $r$ is non-faulty and the scheduler is fair, it is eventually
activated while eligible and performs its positive Phase I movement.
By the destination construction, its movement direction does not
coincide with its current line of collinearity; hence $r$ leaves
$\mathcal L'$. Therefore, $\mathcal L'$ cannot persist until the end of Phase I, a
contradiction.

Thus, if a newly created line of collinearity persists at the end of
Phase I, at least one of its robots must be mobility-faulty.
\end{proof}




\begin{lemma}
\label{l_8}
At the end of Phase~I, all non-faulty non-terminal robots are
mutually visible.
\end{lemma}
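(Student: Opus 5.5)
I will argue by contradiction, classifying every obstruction that can remain at the end of Phase~I via the lemmas already established. Suppose two non-faulty robots $r_a$ and $r_b$, both non-terminal at the end of Phase~I, are not mutually visible. Then some robot $r_c$ lies in the open segment $\overline{r_a r_b}$, so $r_a,r_b,r_c$ lie on a common line of collinearity $\mathcal L$ at that time. Either $\mathcal L$ is (a subset of) an initial line of collinearity, or it is a line created during Phase~I. There is no third possibility: by Lemma~\ref{l_4}, three robots that were initially non-collinear stay non-collinear throughout Phase~I, so any collinear triple at the end of Phase~I was already collinear initially.

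\textbf{Case 1: $\mathcal L$ is an initial line of collinearity that persists.} By Lemma~\ref{l_5}, every robot on $\mathcal L$ is mobility-faulty. This contradicts the non-faultiness of $r_a$ (and $r_b$). To be careful, I will also cover the subcase where $r_a,r_b,r_c$ are initially collinear but $\mathcal L$ is not literally the same line as before. Since the three were collinear initially and are collinear now, the current line is either the original one, if none of them moved off it, or a new line formed by three robots from the same initial line. The latter falls under Case~2. In the former, Lemma~\ref{l_2} says each non-faulty robot among them performed a Phase~I move in a direction not along its current collinearity line, so it cannot still be on $\mathcal L$.

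\textbf{Case 2: $\mathcal L$ is newly created in Phase~I.} By Lemma~\ref{l_6}, $\mathcal L$ contains exactly the three positions $r_a,r_b,r_c$, and at least one of them is terminal on $\mathcal L$. By Lemma~\ref{l_7}, since $\mathcal L$ persists, at least one of them is mobility-faulty. Since $r_a$ and $r_b$ are non-faulty, the faulty robot must be $r_c$, the middle one. The key step is to use the second part of the argument in Lemma~\ref{l_6}: every new collinear triple created by a frontier movement contains a robot $r$ from the unprocessed part of the propagation front that has not yet moved. I intend to show that this unprocessed robot is necessarily an endpoint of the triple, hence one of $r_a,r_b$, and is non-faulty. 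Fairness and the eligibility propagation in the proof of Lemma~\ref{l_7} then force $r$ to make its Phase~I move off $\mathcal L$ before Phase~I ends, and by Lemma~\ref{l_4} applied thereafter it cannot re-enter a collinearity with $r_c$ and the other endpoint. This contradicts persistence. What remains is the configuration in which both endpoints $r_a,r_b$ have already moved and only the middle robot $r_c$ is unprocessed and faulty. Here I would argue from the ordering along the original line. Eligibility propagates from the ends inward, so a robot that is geometrically between two already-moved robots on the original line is further inward. A new line through two moved robots and an unmoved faulty robot between them would require the two moved robots to have displaced to precisely compensating positions. I would rule this out using the displacement bound $\sigma\le d/3^4$ together with the fact that $r_c$'s original line, determined by two stationary robots, is avoided by the destination construction.

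The hardest step is this last subcase. The paper's lemmas do not explicitly exclude a faulty robot in the middle flanked by two moved non-faulty robots. So I will first try to show that such a collinearity cannot be \emph{created}, because each moving robot avoids lines through pairs of visible robots; this requires checking that the destination rule excludes landing on $\mathcal L_{c\,b}$ when $r_c$ and $r_b$ are visible to $r_a$. If that fails, I would instead conclude only that the non-terminal robots involved are mutually visible because $r_a$ or $r_b$ is in fact terminal on $\mathcal L$. An endpoint of a three-point collinearity is terminal, so the two non-terminal robots in the lemma's hypothesis cannot both be the endpoints of the blocking triple. I suspect this reading, in which non-terminal is understood with respect to the final lines, is the intended short proof. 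I would present it as the main line, with the above case analysis as support.
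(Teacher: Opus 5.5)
Your Case~1 matches the paper's first step. Case~2, however, is never closed, and the short argument you plan to make the main line does not work with the paper's meaning of ``non-terminal.''

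In the paper, a non-terminal robot is one lying in the interior of \emph{some} line of collinearity. A robot can therefore be an endpoint of one line and still be non-terminal. The paper uses the term this way in several places: for junction robots; for consecutive critical robots of a cyclic dependency, which are friends (so one of them is an endpoint) on their common line; and in the Phase~II discussion, which explicitly considers a non-faulty non-terminal robot that is a friend of $r_i$ on $r_i$'s line. So the fact that $r_a$ and $r_b$ are the endpoints of the three-point line $\mathcal L$ only says they are terminal \emph{on} $\mathcal L$. That does not contradict the hypothesis.

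Redefining non-terminal as ``non-terminal on every line through it'' does not rescue the argument either. It makes the lemma nearly vacuous, and it would not support the later use in Lemma~\ref{l_15}, which needs all non-faulty non-terminal robots to see the same Phase~II candidates.

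Your supporting case analysis does not fill the hole. The clause of Lemma~\ref{l_6} about a robot from the unprocessed part of the front is stated only when all three robots on the new line are non-faulty. That hypothesis fails here, since $r_c$ is faulty. Nor can the destination rule exclude a faulty middle robot flanked by two moved robots:
\begin{itemize}
\item the Phase~I direction rule only forbids moving along lines through the mover itself;
\item the clearance $d_i$ only involves triples that are non-collinear when the move is computed;
\item two robots of the same initial line that move in the same round both compute from positions still collinear with $r_c$.
\end{itemize}
The paper itself describes exactly this configuration: an initial line through $P^*$ whose robots leave it on opposite sides.

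The missing step is the one the paper uses. Read the last clause of Lemma~\ref{l_6}, ``at least one of these robots is terminal,'' as a statement about the robots' terminal status, not about terminality on $\mathcal L$. On $\mathcal L$ it is automatic for three collinear points, which is why it gave you no leverage. With the status reading, Case~2 takes one line. $r_a$ and $r_b$ are non-terminal by hypothesis, and $r_c$ is non-terminal because it lies in the interior of $\overline{r_a r_b}$. So all three robots on the new line are non-terminal, contradicting Lemma~\ref{l_6}.

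This also disposes of your hard subcase without any fault or ordering analysis. When a faulty robot blocks two moved non-faulty robots, Lemma~\ref{l_6} forces at least one of $r_a,r_b$ to be terminal, so the pair is outside the scope of this lemma. The paper treats such obstructions separately, in Lemma~\ref{l_9} and Phase~IV. Be aware that the whole argument then rests on that clause of Lemma~\ref{l_6}, whose own justification in the paper is only sketched.
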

\begin{proof}
Suppose, for contradiction, that two non-faulty non-terminal robots
$r_i$ and $r_j$ are not mutually visible at the end of Phase~I.
Then some robot $r_k$ lies in the interior of
$\overline{r_i r_j}$, and hence $r_i,r_k,r_j$ lie on a line of
collinearity $\mathcal L$.

The line $\mathcal L$ cannot be a persisting initial line of
collinearity, since by Lemma~\ref{l_6} every robot lying on such a line
at the end of Phase~I is mobility-faulty, whereas $r_i$ and $r_j$ are
non-faulty. Hence, $\mathcal L$ must be a newly created line of
collinearity.

By Lemma~\ref{l_5}, $\mathcal L$ contains exactly three robots and at
least one of them is terminal. However, $r_i$ and $r_j$ are
non-terminal by assumption, and $r_k$ is also non-terminal since it
lies in the interior of $\overline{r_i r_j}$. Thus all three robots
on $\mathcal L$ are non-terminal, contradicting Lemma~\ref{l_6}.
Therefore, every pair of non-faulty non-terminal robots is mutually
visible at the end of Phase~I.
\end{proof}

\begin{lemma}
\label{l_9}
If two non-faulty terminal robots are not mutually visible at the end of Phase I, then there is a faulty non-terminal robot that lies in between them.
\end{lemma}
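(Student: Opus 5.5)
We argue by contradiction, in the same way as Lemma~\ref{l_8}. Suppose $r_i$ and $r_j$ are non-faulty terminal robots that are not mutually visible at the end of Phase~I. Then some robot $r_k$ lies in the interior of $\overline{r_i r_j}$. Hence $r_i,r_k,r_j$ lie on a line of collinearity $\mathcal L$, and $r_k$ is non-terminal on $\mathcal L$. The statement will follow once we show that $r_k$, or some robot strictly between $r_i$ and $r_j$, is mobility-faulty. We may take $r_k$ to be adjacent to $r_i$ on $\overline{r_ir_j}$, or pick any interior robot and argue for all of them at once.

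We split into two cases according to the origin of $\mathcal L$.

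\emph{Case 1: $\mathcal L$ is an initial line of collinearity that persists at the end of Phase~I.} Lemma~\ref{l_5} states that every robot on a persisting initial line is mobility-faulty. But $r_i$ and $r_j$ are non-faulty and lie on $\mathcal L$, so this case is impossible. (The hypothesis already rules it out; we record it for completeness.)

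\emph{Case 2: $\mathcal L$ is newly created during Phase~I.} By Lemma~\ref{l_6}, $\mathcal L$ contains exactly three robot positions, namely $r_i,r_k,r_j$, all coming from a common initial line. By Lemma~\ref{l_7}, since $\mathcal L$ persists to the end of Phase~I, at least one of these three robots is mobility-faulty. As $r_i$ and $r_j$ are non-faulty, the faulty robot must be $r_k$. Moreover $r_k$ lies strictly between $r_i$ and $r_j$, so it is non-terminal on $\mathcal L$. This is exactly the required faulty non-terminal robot.

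A further subcase must be handled. The obstructing line might have become a line of collinearity only because non-faulty terminal robots moved \emph{after} completing their Phase~I step, so that $\mathcal L$ is not a line created by a Phase~I move in the sense of Lemma~\ref{l_6}. To exclude this we use Lemma~\ref{l_2}: each non-faulty robot moves exactly once in Phase~I, and non-faulty robots stop moving once their light is a Phase~I completion color. Hence every collinearity present at the end of Phase~I was either present initially or created by some Phase~I movement, and the two cases above are exhaustive.

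The main obstacle is the appeal to Lemma~\ref{l_7}, because it silently requires that the newly created line really persists and is not destroyed and recreated. I would state explicitly that ``end of Phase~I'' means the time at which every non-faulty robot shows \textsc{moved-T-1} or \textsc{moved-NT-1}. I would also check that a non-faulty robot lying on a new line where some member is still unprocessed cannot be one of $r_i,r_j$, which is immediate because $r_i$ and $r_j$ are non-faulty and have finished.
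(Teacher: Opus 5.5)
Your proof is correct and follows the same route as the paper's: Lemma~\ref{l_5} rules out a persisting initial line, Lemma~\ref{l_6} leaves exactly the three robots $r_i,r_k,r_j$ on the newly created line, and Lemma~\ref{l_7} then forces the middle robot $r_k$ to be the mobility-faulty one, which is non-terminal because it lies strictly between $r_i$ and $r_j$. Your extra remark that every collinearity at the end of Phase~I is either initial or created by a Phase~I move, tied to your explicit definition of ``end of Phase~I,'' is left implicit in the paper but does not change the argument.
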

\begin{proof}
Let $r_i$ and $r_j$ be two non-faulty terminal robots that are not
mutually visible at the end of Phase~I, and let $r_k$ lie between
them. By Lemma~\ref{l_5}, their line cannot be a persisting initial
line of collinearity; hence it is newly created. By Lemma~\ref{l_6},
this line contains exactly three robots, and by Lemma~\ref{l_7}, at
least one of them is mobility-faulty. Since $r_i$ and $r_j$ are
non-faulty, $r_k$ must be mobility-faulty. Moreover, as $r_k$ lies
between $r_i$ and $r_j$, it is non-terminal.
\end{proof}


\textbf{Phase II (Only non-faulty non-terminal robots move):} During this phase, every non-faulty non-terminal robot moves to break the newly created collinearities during Phase I. Let us consider a non-faulty non-terminal robot $r_i$ on a line of collinearity $\mathcal L$.  Then, by Lemmas~\ref{l_6}--~\ref{l_7} above, $\mathcal L$ is a new line of collinearity created in Phase I, and at least one of its friends on $\mathcal L$ is a faulty robot. If the other friend is a terminal robot or a faulty non-terminal robot, then if it moves, it breaks the collinearity on $\mathcal L$, since terminal robots do not move in this phase. Now, if the other friend is a non-faulty non-terminal robot, then this robot and $r_i$ move simultaneously in the same round, and $r_i$ may remain collinear after the movements. If these robots are part of any cyclic dependency, then this process may continue. In order to break a cyclic dependency, we move robots in some order. If robots move in some order, then the cyclic dependency breaks down, and further movements of the robots will break the remaining dependencies in that cycle (this part is achieved during Phase III). 
We obtain a partial ordering of the movements of the non-faulty
non-terminal robots using their radial distances $\rho_i(t)$ from
the reference point $P^*$, and move the farthest eligible
non-terminal robot first among those that have not yet completed
their Phase II movement.

We obtain a partial ordering of movements of the non-faulty non-terminal robots (faulty robots may not see all faulty non-terminal robots) by using their relative distances from the reference point, and we move the farthest non-terminal robot first among the non-terminal robots that have not moved yet in Phase II. Robots use their lights to communicate whether they have moved in Phase II or not (see below).

The Phase II ordering does not require all non-faulty non-terminal
robots to observe the same global set of candidates. It is sufficient
to compare robots that are directly dependent. In particular, two
consecutive critical non-terminal robots of a cyclic dependency are
friends on a common line of collinearity and hence are mutually
visible. Therefore, if two such robots $r_i$ and $r_j$ have
$\|r_i-P^*\|>\|r_j-P^*\|,$
then $r_j$ cannot be eligible while $r_i$ still has light
\textsc{moved-NT-1}, since $r_j$ observes the farther unprocessed
robot $r_i$. Hence, simultaneously eligible dependent robots must
have the same distance from $P^*$.
Now consider the case when all the non-terminal robots in a cyclic dependency are at the same distance $l$ from $P^*$. Then these points lie on a circle of radius $l$ centered at $P^*$(see Fig.~\ref{fig:ordering}(a)). Now, according to our distance rule, all of these robots are eligible to move in the same round. If they all move in the same round, their cyclic dependency may remain intact even after their movements. We move them along the line segment joining their respective current position and $P^*$ (this helps us in Phase III). However, due to these movements, one of the following will happen: either a non-terminal robot becomes terminal (when it and its friend move the same distance)(see Fig.~\ref{fig:ordering}(b)) or they remain cyclic dependent without all
lying on the same new circle having $P^*$ as center. In the second case, it can be easily shown that all these robots are at distinct distances from $P^*$(see Fig.~\ref{fig:ordering}(c)), and in Phase III, these robots move one by one according to their distances from $P^*$, and non-faulty non-terminal robots become terminal robots at the end of Phase III. 

During the Phase II, a non-terminal robot uses three colors for their lights: (i) {\it moving-2} to indicate that it has completed its movement in Phase II, (ii) if after the movement in Phase II, it remains non-terminal, it uses {\it moved-NT-2} to indicate that after the movement in Phase II, they are still non-terminals, and (iii)  if after the movement in Phase II, it becomes a terminal robot, it uses {\it moved-T-2} to indicate that after the movement in Phase II, it has become terminal. In Phase III, only the robots with light {\it moved-NT-2} move. During Phase II, each eligible non-faulty non-terminal robot $r_i$
moves towards the destination $\operatorname{dest}_i(t)$ lying in the
interior of the line segment joining $r_i(t)$ and $P^*$, as defined in
Section~\ref{comnputation}. We formally define the movement rules  for the non-terminal robots as follows: 
\noindent
\textbf{$r_i$ is a robot with light \textsc{moved-NT-1}:}
\begin{enumerate}
    \item If $r_i$ is currently terminal, it changes its light to
    \textsc{moved-T-2} without moving.

    \item If $r_i$ is currently non-terminal and is farthest from $P^*$
    among all the visible robots with light \textsc{moved-NT-1},
    then it changes its light to \textsc{moving-2} and moves toward
    its computed destination point on the line segment joining
    $r_i$ and $P^*$.

    \item If $r_i$ has light \textsc{moving-2}, then, upon its next
    activation, it changes its light to \textsc{moved-T-2} if it is
    currently terminal, and to \textsc{moved-NT-2} otherwise.
\end{enumerate}
In all other cases, $r_i$ does nothing.


After completing its Phase II action, a robot has light either
\textsc{moved-T-2} or \textsc{moved-NT-2}.
\begin{lemma}
\label{l_10}
After all Phase-II actions associated with a cyclic dependency have
completed, either the dependency is broken or every pair of
consecutive non-faulty critical non-terminal robots remaining in the
dependency have distinct distances from $P^*$.
\end{lemma}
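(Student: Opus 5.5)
We argue by looking at how the Phase II rule orders consecutive critical non-terminal robots of a cyclic dependency, and by the geometry of the radial moves toward $P^*$. Fix a cyclic dependency and two consecutive non-faulty critical non-terminal robots $r_i,r_j$ on a common line $\mathcal L$. As noted before the lemma, they are friends on $\mathcal L$ and hence mutually visible. The first step is a case split on their distances from $P^*$ at the moment the earlier of the two becomes eligible in Phase II.

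\emph{Case 1: $\rho_i\neq\rho_j$, say $\rho_i>\rho_j$.} Then $r_j$ is not eligible while $r_i$ shows \textsc{moved-NT-1}. So $r_i$ moves first, in a round in which $r_j$ is stationary. Every radial move satisfies $\sigma_i<\rho_i$, so $\rho_i$ strictly decreases. Moreover, $\sigma_i\le d_i/3^4$, and $d_i$ includes the clearances $d^{P^*}_{i,m}$ to the lines $L^*_m$. Hence $r_i$ cannot reach the circle of radius $\rho_j$ about $P^*$ in a way that makes the two distances equal; we would check this against the choice of $d_i$. If the distances of consecutive pairs are already distinct, they stay distinct, because afterwards $r_j$ moves alone along its own radius. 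We would still need to verify that a single move cannot cross $\rho_j$ exactly. The cleanest route is to note that equality would force $r_i$ to pass from one side of the circle to the other. We expect the bound $\sigma_i\le \min\{d_i,D_i\}/3^4$, with $D_i\le\|r_i-r_j\|$, to preclude an exact hit only generically, so the argument should instead rely on the fact that either the line through $r_i,r_j$ is broken (dependency broken) or the relative order persists.

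\emph{Case 2: $\rho_i=\rho_j=l$.} Here both may move in the same round, along segments toward $P^*$, by amounts $\sigma_i,\sigma_j$. If $\sigma_i=\sigma_j$ and both complete the move, the points $r_i,r_j$ are scaled homothetically toward $P^*$, and the homothety maps $\mathcal L$ to a parallel line. The remaining (faulty or stationary) robots on $\mathcal L$ stay on the old line, so the common line is destroyed: the non-terminal robot becomes terminal and the dependency breaks. This matches Fig.~\ref{fig:ordering}(b). If the executed displacements differ, whether from different $\sigma$ values, non-rigid stops, or one robot being faulty, then the new distances are $l-s_i\neq l-s_j$, which gives distinct distances. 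The remaining possibility is that only one of them is activated or moves. Then the mover's distance becomes $<l$ while the other's stays $l$, again giving distinct distances.

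Finally, we would combine the cases over all consecutive pairs of the cycle, processed in the order induced by the Phase II rule, and use that each robot moves once in Phase II (its light becomes \textsc{moved-T-2} or \textsc{moved-NT-2}). This shows that after all Phase II actions every consecutive pair either lost its common line or ended at distinct distances. The main obstacle is the subtle sub-case in which two robots at distinct distances end up at equal distance after moving, for example a farther robot moving in one round and a nearer one later. We would handle it by observing that the nearer robot only moves after the farther one completes, and its own move further reduces its $\rho$. Equality would then need the farther robot to overshoot below $\rho_j$, which the $3^{-4}$ scaling with $D_i\le\|r_i-r_j\|$ and the $P^*$-line clearances should exclude, or else the collinearity is broken anyway.
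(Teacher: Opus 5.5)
\textbf{There is a genuine gap.} Your Case~1 and your final ``main obstacle'' are left open, and the tool you propose for them cannot close them. Nothing in $\sigma_i(t)=\frac{1}{3^4}\min\{d_i(t),D_i(t),B_i(t),\rho_i(t)\}$ controls the radial gap $\rho_i-\rho_j$. That gap can be arbitrarily small while $\|r_i-r_j\|$, the altitudes in $\mathcal D_i(t)$ and the clearances $d^{P^*}_{i,m}(t)$ all stay large. So the farther robot may stop exactly on, or inside, the circle of radius $\rho_j$ about $P^*$, since the adversary chooses where a non-rigid move ends. The nearer robot's later move can then produce equal radii. The last sub-case of your Case~2 has the same hole: if only one of two equidistant robots moves, the other may later move by exactly the same amount. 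So nothing is proved in these sub-cases, and your hedges ``only generically'' and ``should exclude'' are accurate.

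The missing idea is the other horn of the dichotomy you mention but never prove. Suppose $r_i$ moves in a round in which its partner $r_j$ stays put. Then $r_i$ travels along $\mathbf{u}^{R}_i$, which is parallel to the common line $\mathcal L$ only if $P^*\in\mathcal L$. So for $P^*\notin\mathcal L$ it leaves $\mathcal L$ at once, and the edge $r_ir_j$ on $\mathcal L$ is gone, whatever radius it lands on. Consequently you only need to compare radii for pairs that moved in the same round. By the farthest-first rule (your own Case~1 observation, i.e.\ Lemma~\ref{l_17}), such pairs started at equal radii, and there your homothety versus unequal-lengths dichotomy finishes the argument.

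This is the paper's route. It takes the \emph{later} mover of each pair still dependent after Phase~II and applies Lemma~\ref{l_16}: a solo radial move breaks the line, and a simultaneous equidistant move either breaks it or yields unequal radii. So the two radii differ right after the last change to either of them, and neither robot moves again in Phase~II.

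Finally, both your homothety step and the transversality step need $P^*\notin\mathcal L$. If $P^*\in\mathcal L$, the radial moves stay on $\mathcal L$ and the homothety fixes $\mathcal L$. Distinct radii then follow only when the two robots lie on the same side of $P^*$, so this degenerate case needs its own argument.
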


\begin{proof}
Consider two consecutive non-faulty critical robots that remain
dependent after completing their Phase-II actions. They are friend
robots on their common line of collinearity. When the later of the
two completes its Phase-II movement and remains non-terminal,
Lemma~\ref{l_16} implies that its distance from $P^*$ differs from that
of each of its friends, in particular from the other critical robot.
Hence every consecutive remaining critical pair has distinct
distances from $P^*$. If no such pair remains, the dependency has
been broken.
\end{proof}




\begin{figure}[ht]
\centering
\includegraphics[width=0.5\linewidth]{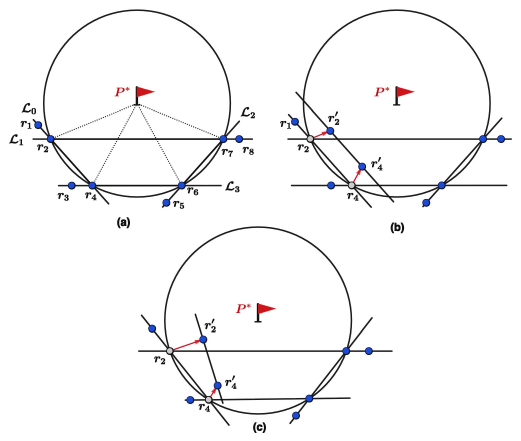}
\caption{An illustration of resolving an equal-distance cyclic dependency. (a) The critical non-terminal robots lie on a common circle centered at $P^*$. (b) If the robots move equal distances toward $P^*$, the existing collinearities are broken. (c) If they move unequal distances due to non-rigid movement, the collinearities are also broken, and the robots attain distinct distances from $P^*$, inducing an order for Phase III.}
\label{fig:ordering}
\end{figure}

\textbf{Phase III (Only non-faulty non-terminal robots move):}  
After the relevant Phase-II actions have completed, consecutive
non-faulty critical robots that remain in the same dependency have
distinct distances from $P^*$. Hence, the radial-distance ordering
prevents directly dependent robots from remaining simultaneously
eligible in Phase III.
Since after Phase II, the  non-faulty non-terminal robots belonging to the same cyclic or non-cyclic dependencies are at different distances from $P^*$, we can move them one by one. Since from each line of collinearity exactly one robot (among the three) moves, the remaining collinearities involving non-faulty non-terminal robots gradually break one by one. However, some collinearities may remain from Phase I  involving a faulty non-terminal robot and at least one non-faulty terminal robot (for example when an initial line of collinearity $\mathcal L'$ passes through $P^*$ two robots from $\mathcal L'$ may move in the two opposite sides of $\mathcal L'$ and they may become collinear on a new line with another faulty robot lying on $\mathcal L'$ in between). In this case, we handle the case in Phase IV.

During the Phase III, a non-terminal robot uses three colors for their lights: (i) {\it moving-3} to indicate that it has completed its movement in Phase III, (ii) if after the movement in Phase III, it remains non-terminal, it uses {\it moved-NT-3} to indicate that after the movement in Phase III, it is still a non-terminal robot (in case of faulty robots), and (iii)  if after the movement in Phase III, it becomes a terminal robot, it uses {\it moved-T-3} to indicate that after the movement in Phase III, it has become a terminal robot.  During Phase III, each eligible non-faulty non-terminal robot $r_i$
moves towards the destination $\operatorname{dest}_i(t)$ lying in the
interior of the line segment joining $r_i(t)$ and $P^*$, as defined in
Section~\ref{comnputation}. A robot with light \textsc{moved-NT-2} becomes eligible for its
Phase III action only when none of its friend robots on its current
dependency lines has light \textsc{moved-NT-1} or
\textsc{moving-2}. Thus, a Phase III action cannot overtake an
unfinished Phase II action on the same dependency, and no global
Phase II completion test is required.

\medskip

 \noindent {\bf $\boldsymbol{r_i}$ is a robot with light {\it moved-NT-2}:}
	\begin{enumerate}
		\item If $r_i$ is the farthest from the point $P^*$ among
		all the visible non-terminal robots with light {\it moved-NT-2}, then it changes its light to {\it moving-3}	and moves towards the computed destination point.

		\item If $r_i$ is a terminal robot and its light is {\it moving-3}, then $r_i$ changes its light to {\it moved-T-3} and does not move.

		\item  If $r_i$ is a non-terminal robot and its light is {\it moving-3}, then $r_i$ changes its light to {\it moved-NT-3} and does not move. In this case, robot $r_i$ is a faulty non-terminal robot.
	\end{enumerate}
	In all remaining cases, robot $r_i$ does nothing.

At the end of Phase III, every non-faulty robot has a light either {\it moved-T-1}, {\it moved-T-2}, {\it moved-T-3}, or
{\it moved-NT-3}. Moreover, all cyclic dependencies among the robots with light {\it moved-NT-2} are eliminated, and the non-faulty robots can identify the completion of this phase.\\


\textbf{Phase IV (Only terminal robots with light {moved-T-1}, {moved-T-2}, or {moved-T-3} move):}  Phase IV is necessary because a terminal non-faulty robot may still
have its visibility obstructed by a mobility-faulty non-terminal
robot. Such a configuration may arise after an earlier simultaneous
movement, while the faulty robot remains immobile. Therefore,
terminal robots with light \textit{moved-T-1},
\textit{moved-T-2}, or \textit{moved-T-3} perform one final radial
movement toward $P^*$. 
During Phase IV, each eligible non-faulty terminal robot $r_i$
moves towards the destination $\operatorname{dest}_i(t)$ lying in the
interior of the line segment joining $r_i(t)$ and $P^*$, as defined in
Section~\ref{comnputation}.

\noindent
$r_i$ is a physically terminal robot with light
\textsc{moved-T-1}, \textsc{moved-T-2}, or \textsc{moved-T-3}:

\begin{enumerate}
    \item Upon its Phase-IV activation, $r_i$ changes its light to
    \textsc{moving-4} and moves toward its radial destination
    $\operatorname{dest}_i(t)$.

    \item If $r_i$ has light \textsc{moving-4}, then upon its next
    activation it changes its light to \textsc{moved-T-4} and does
    not move.

    \item If $r_i$ has light \textsc{moved-T-4}, it performs no  further action.
\end{enumerate}

\textbf{Termination:}
If every robot visible to $r_i$, including $r_i$ itself, has light
either \textsc{moved-T-4} or \textsc{moved-NT-3}, then $r_i$
terminates without changing its current light color.

\subsection{Correctness}

Due to space constraints, we mainly state the lemmas that are essential to prove the correctness of our algorithm, \textsc{Mutual-Visibility-Mobility-Fault()}.

\begin{lemma}
\label{l_11}
During the whole execution of
\textsc{Mutual-Visibility-Mobility-Fault()}, each robot moves at most
four times.
\end{lemma}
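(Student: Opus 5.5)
The plan is a light-state counting argument: a robot moves only during an activation in which it switches its light to one of \textsc{moving-1}, \textsc{moving-2}, \textsc{moving-3}, \textsc{moving-4}. No light is ever changed back to an earlier color, so each such transition happens at most once. First, I will list the movement rules of all four phases and record, for each rule, the light the robot must hold before acting and the light it holds afterwards. The only rules with a nonzero destination are:
\textsc{off}$\to$\textsc{moving-1} (Phase I),
\textsc{moved-NT-1}$\to$\textsc{moving-2} (Phase II),
\textsc{moved-NT-2}$\to$\textsc{moving-3} (Phase III), and
\textsc{moved-T-1/T-2/T-3}$\to$\textsc{moving-4} (Phase IV).
Every other rule, namely the \textsc{moving-$k$}$\to$\textsc{moved-$\ast$-$k$} transitions and the termination rule, keeps the robot stationary.

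Next, I will order the colors into levels: \textsc{off} at level $0$, \textsc{moving-$k$} and \textsc{moved-$\ast$-$k$} at level $k$ for $k=1,2,3,4$, with \textsc{moving-$k$} preceding \textsc{moved-$\ast$-$k$} inside level $k$. I will check that every rule maps a color to a strictly later one in this order. The one place needing care is the \textsc{moved-T-$k$}$\to$\textsc{moving-4} transition for $k<3$, which skips levels. The sequence of lights of any robot is therefore strictly increasing, so each of the four moving transitions is taken at most once. This holds whether the robot is faulty or not; a faulty robot's attempt produces zero displacement, which only helps.

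I also need to confirm that a single activation yields at most one movement. This follows because each Look--Compute--Move cycle computes one destination, and non-rigid interruption does not cause an extra move: the robot's next action is a color-only step. So the total number of movements is at most four. This is consistent with Lemma~\ref{l_2}, which gives exactly one movement in Phase I.

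The main obstacle is checking that no rule is ever enabled from a later color that would allow a second Phase II or Phase IV move. One example is a robot holding \textsc{moved-T-2} that later becomes non-terminal again; Phase III rules apply only to \textsc{moved-NT-2}, so this causes no repeat. Another is a terminal \textsc{moved-T-4} robot, for which the Phase IV rule explicitly does nothing. I would resolve this by going through the rule list once more and confirming that the precondition of each moving rule is a color strictly below the corresponding \textsc{moving-$k$}, so no moving rule fires twice.
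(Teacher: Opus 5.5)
Your proposal is correct and is essentially the paper's argument: at most one movement per phase, because the light-state transitions stop a robot from becoming eligible for a second movement. Your strict level ordering of the colors makes this rigorous and also handles robots that skip phases (e.g.\ \textsc{moved-T-1}$\to$\textsc{moving-4}); the only slip is that your list of stationary rules omits the Phase~II rule \textsc{moved-NT-1}$\to$\textsc{moved-T-2}, which is harmless since that rule involves no movement and raises the level.
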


\begin{proof}
A robot performs at most one movement in each of Phases I, II, III,
and IV. The light-state transitions prevent it from becoming eligible
for a second movement within the same phase. Therefore, every robot
performs at most four movements during the entire execution.
\end{proof}

 \begin{lemma}
 \label{l_12}
Let $r_i,r_j,r_k$ be three robots that are non-collinear in the
initial configuration. Then they remain non-collinear throughout the
execution of algorithm \textsc{Mutual-Visibility-Mobility-Fault()}.
\end{lemma}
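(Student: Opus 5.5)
The plan is to extend the round-by-round induction of Lemma~\ref{l_4} from Phase~I to the whole execution. The key observation is that the proof of Lemma~\ref{l_4} never used the direction of movement; it used only the bound $\sigma_i(t)\le d_i(t)/3^4$. In Phases~II--IV the radial displacement is
\[
\sigma_i(t)=\tfrac{1}{3^4}\min\{d_i(t),D_i(t),B_i(t),\rho_i(t)\}\le \tfrac{d_i(t)}{3^4},
\]
whenever the configuration is non-linear, and the set $\mathcal D_i(t)$ is the same in all phases. So the invariant to carry is: \emph{immediately before every round, any three robots that were non-collinear initially are non-collinear}. By Lemma~\ref{l_4} this holds at the end of Phase~I. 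Since phases are not globally synchronized, I will run the induction over all rounds of the execution rather than phase by phase; each activated robot performs a move of one of the two types, and both obey the same bound.

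For the inductive step, fix a round and three initially non-collinear robots $r_i,r_j,r_k$. By hypothesis they form a triangle with minimum altitude $h>0$, and in particular the configuration is non-linear, so the $\mathcal R_{NL}$ branch of $\sigma$ applies. For each of the three robots that moves, I repeat the two cases of Lemma~\ref{l_4}.
\begin{itemize}
\item If the other two robots are visible, then $d_i(t)\le h$.
\item Otherwise, replace them by the first visible robots $\widehat r_j,\widehat r_k$ on the segments $\overline{r_ir_j}$ and $\overline{r_ir_k}$. The area-scaling argument gives $\widehat h\le 2h$, hence $d_i(t)\le 2h$.
\end{itemize}
I must check one point in the second case. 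The triangle $r_i\widehat r_j\widehat r_k$ is non-degenerate because $\widehat r_j,\widehat r_k$ lie on the distinct rays from $r_i$ through $r_j$ and $r_k$. So its altitudes genuinely belong to $\mathcal D_i(t)$, whichever phase $r_i$ is in.

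Either way every displacement is at most $2h/3^4<h/2$. Non-rigid stops, faults, and non-activation only shrink displacements. The strip argument then applies verbatim: a triangle of minimum width $h$ cannot fit in an open strip of width less than $h$. Hence the three robots stay non-collinear after the round.

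The main obstacle is robustness of the invariant across asynchronous phase mixing. In one round, some robots may make Phase~I moves in direction $\mathbf u^I$ while others make radial moves, and terminated or idle robots do not move at all. I will stress that the strip argument is direction-agnostic and needs only the per-robot displacement bound, so mixing is harmless.

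A secondary point is the case where the configuration is currently linear, in which $d_i$ is dropped from $\sigma$. Under the invariant this cannot happen while some initially non-collinear triple exists. If the initial configuration itself is linear, every triple is initially collinear and the statement is vacuous. Finally, since each robot moves at most four times (Lemma~\ref{l_11}), the execution has finitely many moving rounds, and the induction covers the whole execution.
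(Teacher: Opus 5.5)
Your proposal is correct and follows essentially the same route as the paper's proof. Both run a round-by-round induction over the whole execution and use the bound $\sigma_i(t)\le d_i(t)/3^4$, where $d_i(t)\le h$ or, via the first visible robots and the area-scaling estimate $\widehat h\le 2h$, $d_i(t)\le 2h$; this makes every displacement less than $h/2$, and the direction-independent strip/minimum-width contradiction finishes the argument. Your extra checks (non-degeneracy of $\triangle r_i\widehat r_j\widehat r_k$, mixed Phase~I and radial moves in one round, and the linear-configuration branch) are points the paper leaves implicit, while your closing finiteness remark is unnecessary because induction over rounds already covers the entire execution.
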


\begin{proof}
Since $r_i,r_j,r_k$ are non-collinear before the considered round,
the current configuration is non-linear; hence the non-linear case
of the displacement rule applies.
We prove the claim by induction over the $\mathsf{SSYNC}$ rounds. Suppose that
$r_i,r_j,r_k$ are non-collinear immediately before a round, and let
$h>0$ be the minimum altitude of the triangle
$\triangle r_i r_j r_k$ at that time. Since these three robots are
non-collinear, the current robot configuration is non-linear.

We first show that every robot among $r_i,r_j,r_k$ that moves in the
round is displaced by less than $h/2$. Consider $r_i$. If both
$r_j$ and $r_k$ are visible to $r_i$, then the corresponding triangle
altitudes are considered in the computation of $d_i(t)$, and hence
$d_i(t)\leq h$. Therefore, by the displacement rule,
\[
        \sigma_i(t)\leq \frac{d_i(t)}{3^4}<\frac{h}{2}.
\]

Otherwise, let $\widehat r_j$ and $\widehat r_k$ be the first visible
robots from $r_i$ on the segments $r_ir_j$ and $r_ir_k$, respectively.
If $\widehat h$ denotes the minimum altitude of
$\triangle r_i\widehat r_j\widehat r_k$, then, as in the proof of
Lemma~\ref{l_5}, $ \widehat h\leq 2h.$ Since the altitudes of
$\triangle r_i\widehat r_j\widehat r_k$ are considered in computing
$d_i(t)$, $ d_i(t)\leq \widehat h\leq 2h,$
and consequently
\[
        \sigma_i(t)
        \leq \frac{d_i(t)}{3^4}
        \leq \frac{2h}{3^4}
        <\frac{h}{2}.
\]
The same argument applies to $r_j$ and $r_k$. A robot that is not
activated or is mobility-faulty has displacement zero, while a
non-rigid interruption can only reduce the executed displacement.
Thus, irrespective of which subset of the three robots moves
simultaneously in the round, every robot is displaced by less than
$h/2$.

Suppose, for contradiction, that after the simultaneous movements the
three robots become collinear on a line $\ell$. Since each new
position lies on $\ell$ and each displacement is less than $h/2$,
each pre-round position lies at perpendicular distance less than
$h/2$ from $\ell$. Hence the triangle
$\triangle r_i r_j r_k$ before the round is contained in an open
strip of width strictly smaller than $h$ whose axis is $\ell$.
This is impossible because the minimum width of a triangle equals
its minimum altitude, which is $h$.
Therefore, $r_i,r_j,r_k$ remain non-collinear after the round.
Induction over all $\mathsf{SSYNC}$ rounds proves the claim throughout the
execution.
\end{proof}



  \begin{lemma}
    \label{l_13}
Let $r_i\in\mathcal{R}$. If, during the execution of the algorithm, the movement of $r_i$ creates a new line of collinearity $\mathcal{L}$, then all robots lying on $\mathcal{L}$ originally belonged to a single line of collinearity from the initial configuration. Moreover, $\mathcal{L}$ contains exactly three robot positions.
\end{lemma}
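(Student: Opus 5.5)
The plan has two parts, each taken from earlier results: the first claim comes from Lemma~\ref{l_12}, and the count of exactly three comes from how many robots on a common line can move in one round.

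\textbf{First claim.} Let $\mathcal L$ be a line of collinearity created by the movement of $r_i$ in some round. Take any three robots $r_a,r_b,r_c$ lying on $\mathcal L$ after that round. By Lemma~\ref{l_12}, three robots that are non-collinear initially stay non-collinear for the whole execution. So every such triple was collinear in the initial configuration. Applying this to all triples containing two fixed robots $r_a,r_b$ on $\mathcal L$ shows that every other robot on $\mathcal L$ was initially on the line through $r_a(0)$ and $r_b(0)$. Hence all robots on $\mathcal L$ come from a single initial line of collinearity $\mathcal L^0$.

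\textbf{Second claim.} Here I argue by cases on the phase in which $r_i$ moves.

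In Phase~I, the statement is already Lemma~\ref{l_6}. The key ingredients there are:
\begin{itemize}
\item Lemma~\ref{l_3}: at most two robots of $\mathcal L^0$ move in any one round.
\item The clearance term $d_i(t)$ keeps a mover at distance less than $h/2$ from every line through two visible stationary robots. So a moving robot cannot land exactly on a line already fixed by two robots that do not move in that round.
\end{itemize}

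In Phases~II--IV, every move is radial toward $P^*$ and has length at most $d_i(t)/3^4$. Suppose $\mathcal L$ contained four robots of $\mathcal L^0$. At most two of them are moving, because dependent robots are ordered by $\rho$ and friends cannot be simultaneously eligible unless they are equidistant from $P^*$. So at least two of them are stationary. They already determine $\mathcal L$ before the round, and the moving robot would have to land on that line.

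To rule this out, I will reuse the strip argument of Lemma~\ref{l_12}, applied to the triangle formed by the mover and two visible stationary robots. That triangle's altitude enters $d_i(t)$, either directly or through the first visible robots, with the factor-$2$ bound. Since the displacement is below $h/2$ strictly, the mover cannot reach the line. Therefore $\mathcal L$ has at most three robots, and being a line of collinearity, exactly three.

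\textbf{Main obstacle.} The hard case is when two robots of $\mathcal L^0$ move simultaneously and a third, stationary robot is not visible to one of the movers. The strip argument alone then does not forbid four points becoming collinear. My first attempt is to show that any four-point collinearity forces some sub-triple to contain two stationary robots whose line the mover would have to reach. If visibility blocks the clearance bound, I would use the first-visible-robot substitution from Lemma~\ref{l_4}, which costs only a factor of $2$.
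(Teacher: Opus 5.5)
Your first claim is correct and is essentially the paper's argument: Lemma~\ref{l_12} applied to triples on $\mathcal L$. Fixing $r_a,r_b$ makes it cleaner than the paper's version.

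\textbf{The gap is in the count.} The step ``at most two of them are moving, because dependent robots are ordered by $\rho$ \ldots'' does not follow from the reason you give.
\begin{itemize}
\item Stages are not globally synchronized, so one round can mix Phase~I, II, III and IV moves. Splitting into cases by the phase of $r_i$ says nothing about the phases of the other movers.
\item Phase~IV has no $\rho$-ordering at all. Every physically terminal robot with light \textsc{moved-T-1}, \textsc{moved-T-2} or \textsc{moved-T-3} moves on activation regardless of its neighbours, so many robots originating from $\mathcal L^0$ can move in the same round.
\item In Phases~II--III the farthest rule only compares mutually visible robots carrying the same light, and it allows equidistant friends to move together.
\end{itemize}
Knowing only that the four robots come from $\mathcal L^0$, nothing you wrote yields ``at least two stationary.''

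Your ``main obstacle'' is also misplaced: visibility is not the problem. The inductive step of Lemma~\ref{l_12} already handles invisible robots via the first-visible substitution, and it handles arbitrary subsets of movers. So once two robots of $\mathcal L$ are stationary, that step immediately shows every mover was already on $\mathcal L$, and $\mathcal L$ is not new.

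\textbf{The missing idea} is to apply that inductive step round by round. Its proof never uses initial non-collinearity: any triple that is non-collinear at the start of a round is still non-collinear at its end. Consequences:
\begin{itemize}
\item All robots of the new line $\mathcal L$ lay on a single current line $\mathcal M\neq\mathcal L$ just before the round.
\item Positions are distinct, so at most one of them can be stationary, namely the one at $\mathcal L\cap\mathcal M$.
\item Hence the number of robots on $\mathcal L$ is at most one plus the number of robots of $\mathcal M$ that move in that round.
\end{itemize}

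Now induct over rounds on the statement: every line carrying at least four robots is an initial line, all of whose current robots have been on it since time $0$.
\begin{itemize}
\item If $\mathcal M$ has exactly three robots, $\mathcal L$ has at most three.
\item If $\mathcal M$ has at least four robots, then by hypothesis it is an initial line. A non-faulty robot still on it has not made its Phase~I move, since that move leaves the line. So its light is \textsc{off}, only Phase~I rules apply to it, and Lemma~\ref{l_3} gives at most two movers.
\item If the line is unchanged ($\mathcal L=\mathcal M$), its robot set can only shrink, so the hypothesis propagates.
\end{itemize}
This yields ``exactly three'' in every phase.

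The paper's own proof argues the count only for Phase~I, via Lemma~\ref{l_3} and the distance bound. This induction is also what is needed to extend that argument to Phases~II--IV.
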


\begin{proof}
Suppose that after the movement of $r_i$, the new line $\mathcal L$ contains
$r_i$ and two robots $r_j$ and $r_k$. If $r_i,r_j,r_k$ were not collinear
before the movement, Lemma~\ref{l_13} would prevent them from becoming collinear.
Therefore, every robot participating in the new collinearity must originate
from a line of collinearity through $r_i$.

Assume that robots originating from two distinct lines in
$\mathcal{L}_i^0$ lie on the same newly created line. Then at least
one triple consisting of $r_i$ and robots originating from these two
distinct initial lines was initially non-collinear, contradicting
Lemma~\ref{l_13}. Hence, all robots on the new line originate from a single
line $L^*\in\mathcal{L}_i^0.$

During Phase~I, at most two robots of an initial line move in one round, and
the movement directions are chosen so that a moving robot does not move along
an existing robot-to-robot line. The distance bound also prevents the moving
trajectory from intersecting two different lines determined by stationary
robots. Thus a new line can contain only the moving robot and at most two
robots originating from $\mathcal L^*$. Consequently, every newly created
line of collinearity contains exactly three robot positions.
\end{proof}

\begin{lemma}
 \label{l_15}
During Phase II, only non-faulty non-terminal robots are eligible to move. Moreover, every non-faulty robot that remains non-terminal eventually becomes eligible to move. \end{lemma}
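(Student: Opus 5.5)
The plan has two parts, matching the two claims of Lemma~\ref{l_15}.

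\emph{Safety: only non-faulty non-terminal robots move in Phase~II.} I would read off the Phase~II movement rules directly. The only rule that prescribes a movement is rule~2. It applies to a robot with light \textsc{moved-NT-1} that is currently non-terminal and farthest from $P^*$ among the visible robots with light \textsc{moved-NT-1}. Rules~1 and~3 only change the light. Terminal robots either carry \textsc{moved-T-1}, which no Phase~II rule reads, or are converted to \textsc{moved-T-2} without moving. A mobility-faulty robot may execute rule~2 and turn \textsc{moving-2}, but it produces zero displacement. Hence every actual Phase~II displacement is made by a non-faulty non-terminal robot.

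\emph{Liveness: every non-faulty robot that remains non-terminal eventually becomes eligible.} By Lemma~\ref{l_2}, every non-faulty robot eventually finishes its Phase~I action and holds \textsc{moved-T-1} or \textsc{moved-NT-1}. Fix a non-faulty robot $r_i$ with light \textsc{moved-NT-1} that is still non-terminal, and suppose for contradiction that it never becomes eligible.

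I would argue by a potential or well-founded descent on radial distance. Since $r_i$ is never eligible, at every activation some visible robot $r_j$ with light \textsc{moved-NT-1} satisfies $\rho_j>\rho_i$. Among all robots that ever block $r_i$, pick one maximizing $\rho$; finitely many robots make this possible. The goal is to show that such a blocker eventually leaves light \textsc{moved-NT-1}.

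\textbf{Main obstacle.} Faulty robots also execute rule~2 and change their lights. So a maximal blocker $r_j$ either:
\begin{itemize}
\item is currently terminal and switches to \textsc{moved-T-2} by rule~1, or
\item sees no farther \textsc{moved-NT-1} robot, is eligible, and by fairness is activated and turns \textsc{moving-2} (moving if non-faulty, staying put if faulty).
\end{itemize}
The delicate point is that "maximal" must be taken relative to what $r_j$ itself sees, not globally. I would handle this by induction on the finite set $S(t)$ of robots still carrying \textsc{moved-NT-1}.

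\begin{itemize}
\item \emph{$S(t)$ never gains members.} No rule sets a light back to \textsc{moved-NT-1}, and Phase~I lights are not reverted.
\item \emph{$S(t)$ strictly shrinks while nonempty.} Take the robot of $S(t)$ with the globally largest $\rho$; ties are fine, since equal distances do not block under the "farthest" rule if ties are treated non-strictly. This robot sees no farther member of $S(t)$, so it is eligible under rule~1 or~2 and, by fairness, leaves $S(t)$ within one epoch.
\item \emph{Radial moves do not revive blockers.} Radial moves only decrease $\rho$, and a robot leaving $S(t)$ no longer counts as a blocker. So no robot returns to outrank $r_i$.
\end{itemize}
After at most $|S(0)|$ epochs, every robot ranked above $r_i$ has left \textsc{moved-NT-1}. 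At that point $r_i$ is farthest among the visible \textsc{moved-NT-1} robots, hence eligible, contradicting the assumption.

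One residual gap must be checked: the visibility of blockers changes as robots move, so a robot might be hidden at one activation and visible at the next. The monotone shrinking of $S(t)$ makes this harmless, because eligibility is only ever blocked by members of $S(t)$, and those disappear in descending order of $\rho$.
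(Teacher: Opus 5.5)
\textbf{Verdict.} Your proposal is correct and reaches the result by a different route from the paper's; one claim needs a small repair.

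\textbf{Comparison of routes.} Your safety part coincides with the paper's. For liveness, the paper appeals to the mutual visibility of non-faulty non-terminal robots at the end of Phase~I (the content of Lemma~\ref{l_8}). From this it concludes that all candidates observe the same candidate set, so a common farthest candidate exists and is eligible, and it then counts the unprocessed candidates down.

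You never use visibility. The globally farthest member of $S(t)$ sees only members of $S(t)$ with no larger $\rho$. So it automatically passes the local ``farthest among visible \textsc{moved-NT-1} robots'' test, or fires rule~1 if it has become terminal. Moreover, a \textsc{moved-NT-1} robot is stationary until the very cycle in which it leaves that light. Hence the $\rho$-values inside $S(t)$ are frozen; that, not ``radial moves only decrease $\rho$'', is the reason blockers cannot reappear above $r_i$.

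The paper's version is shorter, but it is only as sound as its visibility premise. That premise covers only non-faulty robots, and only a global ``end of Phase~I'' that the algorithm never synchronizes. Your argument does not depend on Lemma~\ref{l_8}. It also handles faulty \textsc{moved-NT-1} candidates that may be hidden from some observers, which is exactly where the paper's ``same set of candidates'' step is unsupported. Finally, it is insensitive to visibility changes caused by concurrent Phase~I, III, or IV moves of other robots.

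\textbf{Repair needed.} The claim ``$S(t)$ never gains members'' is not literally true. There is no Phase~I barrier, so a robot can pass \textsc{moving-1}$\to$\textsc{moved-NT-1} after Phase~II activity has begun, possibly with $\rho>\rho_i$. This includes faulty robots, since they still update their lights. What is true is that lights progress monotonically, so each robot enters $S$ at most once. Hence there is a finite time $T_0$ after which $S$ only shrinks. Run your descent from $T_0$, with the bound $|S(T_0)|$ epochs in place of $|S(0)|$, and the argument goes through unchanged.
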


\begin{proof}
The Phase~II movement rule applies only to a physically non-terminal robot
whose light is \textsc{moved-NT-1}. Hence terminal robots are not eligible.
A mobility-faulty robot may satisfy the logical eligibility condition and
change its light, but it cannot execute a displacement. Therefore, every
robot that actually moves in Phase~II is non-faulty and non-terminal.

Let $r_i$ be a non-faulty robot that remains non-terminal. All non-faulty
non-terminal robots are mutually visible at the end of Phase~I, by Lemma~\ref{l_9},
and therefore observe the same set of Phase~II candidates and their distances
from $P^*$. At least one candidate has maximum distance from $P^*$ and is
eligible. After every such candidate completes its Phase~II action, its light
is no longer \textsc{moved-NT-1}. Thus the number of unprocessed candidates
strictly decreases. Fairness guarantees that an eligible non-faulty candidate
is eventually activated. Repeating this argument shows that $r_i$ eventually
becomes eligible.
\end{proof}

\begin{lemma} 
 \label{l_16}
Let $r_i$ be a non-faulty robot that performs a Phase II movement. After completing the movement, either $r_i$ becomes terminal or, if it remains non-terminal, its distance from $P^*$ is different from the distance of each of its friend robots from $P^*$. 
\end{lemma}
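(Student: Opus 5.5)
Let $r_i$ be the non-faulty robot that moves in Phase~II at time $t$. It is non-terminal on some line of collinearity $\mathcal L$, with friends $r_j$ and $r_k$ on $\mathcal L$. The plan is to show that whenever $r_i$ stays non-terminal on a line after its move, its new distance $\rho_i(t')$ cannot equal the distance of any friend on that line.

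\textbf{Step 1: the move is radial and strict.} By the destination construction, $r_i$ moves along $\overline{r_i(t)P^*}$ by a positive amount at most $\sigma_i(t)<\rho_i(t)$. Hence $\rho_i(t')=\rho_i(t)-s$ with $s\in(0,\sigma_i(t)]$. So $r_i$ strictly decreases its distance from $P^*$ and never reaches $P^*$.

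\textbf{Step 2: friends that do not move in this round.} Suppose $r_i$ remains non-terminal on a line $\mathcal L'$ with a friend $r_m$ that does not move in this round. This covers faulty robots, terminal robots, and robots not activated. By the Phase~II eligibility rule, at time $t$ we have $\rho_i(t)\ge \rho_m(t)$ for every visible friend still carrying \textsc{moved-NT-1}. Friends that already finished Phase~II were processed earlier, and by the ordering they were farther.

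Lemma~\ref{l_12} and Lemma~\ref{l_13} say the lines through $r_i$ after the move are either old lines that persist or new three-robot lines. If $\rho_i(t')=\rho_m(t)$, then $r_i(t')$ lies on the circle of radius $\rho_m$ centred at $P^*$. We then use the clearance bound $\sigma_i\le d_i/3^4$, where $d_i$ includes $d^{P^*}_{i,m}$ and the triangle altitudes. Since $r_m$ is visible to $r_i$, moving by less than $d_i/3^4$ toward $P^*$ can reach that circle only if the points were already essentially degenerate. The contradiction comes from comparing the radial gap $\rho_i(t)-\rho_m(t)$ with $s$. If the gap is positive and $s$ equals it exactly, the new collinearity together with the perpendicular clearance forbids equality. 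I expect this to need a short case analysis.

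\textbf{Step 3: the main obstacle is a simultaneously moving friend.} Suppose the friend $r_j$ is also non-faulty and moves in the same round. By the ordering argument stated before Lemma~\ref{l_10}, simultaneous eligibility of friends forces $\rho_i(t)=\rho_j(t)=:l$. Both move radially, by amounts $s_i$ and $s_j$.

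If $s_i=s_j$, the two new positions are obtained from the old ones by the homothety centred at $P^*$ with ratio $(l-s)/l$. This maps $\mathcal L$ to a parallel line, and the third robot on $\mathcal L$ (faulty or stationary) no longer lies on it. Its distance to the new line is governed by $d^{P^*}$ and the altitude terms in $d_i$. So $r_i$ is terminal with respect to $\mathcal L$, which is the situation of Fig.~\ref{fig:ordering}(b).

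If $s_i\neq s_j$, then $\rho_i(t')\neq\rho_j(t')$ directly. For friends on other lines we repeat Step~2.

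The delicate point is showing that in the equal-step case $r_i$ does not become non-terminal on some other new line with a friend at the same distance. I would handle this with Lemma~\ref{l_13}: any such line contains exactly three robots from one initial line, and equal radial shrinkage preserves the distance ordering of those points. An equal-distance friend pair can therefore persist only as the image of an existing collinearity, and that collinearity has just been shown to break.
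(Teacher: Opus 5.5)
\textbf{There is a genuine gap, in Step~2.} Step~3 also depends on it, since you write ``for friends on other lines we repeat Step~2''. The clearance $d_i(t)$ is made of perpendicular distances to lines: $d^{P^*}_{i,m}$ is the distance from $r_i$ to the line through $r_m$ and $P^*$, and the rest are altitudes of visible triangles. None of these is related to the radial gap $\rho_i(t)-\rho_m(t)$, so nothing forces $\sigma_i(t)$ below that gap.

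Concretely, suppose the directions from $P^*$ to $r_i$ and to $r_m$ form a large angle. Then $d^{P^*}_{i,m}$ is large, yet $0<\rho_i(t)-\rho_m(t)<\sigma_i(t)$ is entirely possible. A rigid move, or a non-rigid stop, can then place $r_i$ exactly on the circle of radius $\rho_m(t)$ about $P^*$. There is no degeneracy to contradict, so the ``short case analysis'' you anticipate cannot be completed.

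Your auxiliary claim also fails at time $t$. You say that friends which have already finished Phase~II are farther than $r_i$. They were farther when they moved, but they have since moved toward $P^*$.

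\textbf{The missing idea is to prove terminality, not distinct distances, for a friend that does not move.} This is what the paper does. Suppose $r_i$ is the only robot moving on a line $\mathcal L$ with $P^*\notin\mathcal L$. Its displacement along $\overline{r_i(t)P^*}$ takes it off $\mathcal L$, so that collinearity is destroyed and $r_i$ is terminal with respect to it. A distance comparison is needed only for a co-moving friend.

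For that case, the core of your Step~3 is sound and sharper than the paper's appeal to the ``geometry of the three-robot line''. The homothety with centre $P^*$ and ratio $(l-s)/l$ sends $\mathcal L$ (as $P^*\notin\mathcal L$) to a distinct parallel line, which misses the stationary third robot. When $s_i\neq s_j$, you get $\rho_i\neq\rho_j$ exactly as in the paper.

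\textbf{Two loose ends remain in Step~3.} First, your treatment of the ``delicate point'' does not hold up. A third robot on a putative new line through the two shrunk positions need not have moved, so ``equal radial shrinkage preserves the ordering'' says nothing about it. Second, in the unequal-step case nothing controls the stationary third robot $r_k$ of $\mathcal L$ if $r_i$, $r_j$, $r_k$ happen to remain collinear. The paper's proof leaves both points open as well. It compares $r_i$ only with its co-moving friend, which is all that Lemma~\ref{l_10} uses.
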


\begin{proof}
Robot $r_i$ moves along the segment joining its current position to $P^*$.
Thus its distance from $P^*$ strictly decreases by the distance actually
traversed. If only $r_i$ moves on a line of collinearity, the old
collinearity is broken and $r_i$ becomes terminal with respect to that line.

The only relevant case in which $r_i$ can remain non-terminal is when one or
more dependent friend robots are activated in the same round. If the
movements produce equal new radial distances and preserve the local
dependency, the involved segments toward $P^*$ have equal executed lengths;
the geometry of the three-robot line then makes one of the participating
robots terminal. Otherwise, non-rigid interruptions yield unequal executed
lengths, and therefore unequal final distances from $P^*$. Hence, whenever
$r_i$ remains non-terminal, its distance from $P^*$ is different from that of
each friend involved in its remaining collinearities.
\end{proof}

\begin{lemma} 
 \label{l_17}
If non-faulty robots form a cyclic-dependency relation during Phase II and are simultaneously eligible to move, then all such robots are equidistant from $P^*$. Consequently, they lie on a common circle centered at $P^*$. 
\end{lemma}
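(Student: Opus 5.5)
We argue directly from the Phase II eligibility rule and the structure of a cyclic dependency. Let $r_1,\dots,r_m$ be the non-faulty critical non-terminal robots of a cyclic dependency that are simultaneously eligible at time $t$, listed in cyclic order. By definition, each consecutive pair $r_s,r_{s+1}$ (indices modulo $m$) is a junction pair on a common line of collinearity. We first argue that $r_s$ and $r_{s+1}$ are adjacent on that line, or at least that each is visible to the other. This is where the structure of newly created lines is used: by Lemma~\ref{l_13} (or Lemma~\ref{l_6} for Phase I), every new line of collinearity contains exactly three robot positions. A line shared by two non-terminal junction robots therefore has exactly one terminal robot beyond each of them and nothing in between, so $r_s$ and $r_{s+1}$ are friends, hence adjacent and mutually visible. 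The closing pair $r_m,r_1$ is handled in the same way, since the first and last lines of a cyclic dependency intersect at a junction robot.

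Next we apply the Phase II rule. A robot is eligible only if it has light \textsc{moved-NT-1} and is farthest from $P^*$ among all visible robots with light \textsc{moved-NT-1}. At time $t$ every $r_s$ is eligible, so it still carries \textsc{moved-NT-1}, and it sees its neighbours $r_{s-1}$ and $r_{s+1}$, which also carry \textsc{moved-NT-1}. Eligibility of $r_s$ gives $\rho_s(t)\ge\rho_{s+1}(t)$, and eligibility of $r_{s+1}$ gives $\rho_{s+1}(t)\ge\rho_s(t)$. Hence $\rho_s(t)=\rho_{s+1}(t)$ for every $s$. Chaining these equalities around the cycle yields $\rho_1(t)=\cdots=\rho_m(t)=:l$, so all of these robots lie on the circle of radius $l$ centred at $P^*$, which is the claim.

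A subtle point concerns robots that are eligible in the same round but have not yet changed their light. All robots activated in an $\mathsf{SSYNC}$ round look at the same pre-move snapshot, so at the moment of the Look each of them still displays \textsc{moved-NT-1}. This is exactly why the pairwise comparison is valid; the same observation underlies the discussion preceding Lemma~\ref{l_10}.

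The main obstacle is the first step: guaranteeing that consecutive critical robots really see each other. If some robot lay strictly between two junction robots on the shared line, the comparison would fail. I would try first to exclude this using the three-robots-per-line property of Lemma~\ref{l_13}, together with Lemma~\ref{l_5} for persisting initial lines. On a persisting initial line all robots are faulty, so no non-faulty critical robot lies on it and such lines are irrelevant. If some line could still carry more robots, the fallback is weaker. Every robot strictly between $r_s$ and $r_{s+1}$ is non-terminal, and if it is non-faulty with light \textsc{moved-NT-1} the same inequality argument propagates equality through it. If it is faulty, we would have to argue that it still displays \textsc{moved-NT-1} and lies on the same radial ordering, which is less clean. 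For this reason the proof should really rest on the three-position bound.
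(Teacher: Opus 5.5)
\textbf{Overall.} Your main argument matches the paper's proof. Consecutive critical robots see each other. The Phase II farthest rule, applied in both directions on the common pre-move snapshot, forces $\rho_s(t)=\rho_{s+1}(t)$, and chaining around the cycle gives a common radius. The paper does not prove the visibility step. It asserts that consecutive critical robots are ``directly dependent on a common line of collinearity'' and hence mutually visible, i.e.\ it takes them to be friends on their shared line.

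\textbf{The flaw in your first step.} Your derivation of visibility from the three-position bound of Lemma~\ref{l_13} fails as written. If the shared line $L_s$ carries exactly three robot positions, only its middle robot is non-terminal on $L_s$. Your picture of ``one terminal robot beyond each of $r_s,r_{s+1}$ and nothing in between'' needs four robots on $L_s$, so it contradicts the bound you just invoked. The bound actually leaves two cases:
\begin{itemize}
\item[(a)] One of $r_s,r_{s+1}$ is the middle of $L_s$ and the other is an endpoint that is non-terminal on its other cycle line. Then they are adjacent, one is a friend of the other, and your comparison goes through.
\item[(b)] Both are endpoints of $L_s$, each non-terminal on its other cycle line, with a third robot strictly between them. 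Then they do not see each other and the two-sided inequality cannot be formed.
\end{itemize}

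\textbf{How to fix it.} The three-position bound does not rule out case (b), and the literal definition of cyclic dependency (``consecutive lines intersect at distinct junction robots'') does not either. You have to exclude (b) through the meaning of dependency, as the paper implicitly does. In the Phase II discussion a dependency is a non-terminal robot whose friend on the line is another non-faulty non-terminal robot. This gives the pinwheel structure $L_s=\{a_s,r_s,r_{s+1}\}$ with $r_s$ in the middle and $a_s$ faulty. Under that reading your first step is immediate, and your fallback discussion of intermediate robots is unnecessary.
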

\begin{proof}
Let $r_i$ and $r_j$ be two consecutive critical non-terminal robots
of the cyclic dependency. Since they are directly dependent on a
common line of collinearity, they are mutually visible.

Suppose that both are simultaneously eligible in Phase II. If
$\|r_i-P^*\|>\|r_j-P^*\|,$
then $r_j$ observes the farther unprocessed robot $r_i$ with light
\textsc{moved-NT-1}, and hence $r_j$ cannot satisfy the Phase II
farthest condition, a contradiction. Similarly,
$\|r_j-P^*\|>\|r_i-P^*\|$ is impossible. Therefore,
$
\|r_i-P^*\|=\|r_j-P^*\|.
$

Applying the same argument to every pair of consecutive critical
robots around the cyclic dependency yields $
\|r_1-P^*\|=\|r_2-P^*\|=\cdots=\|r_k-P^*\|.
$
Hence all simultaneously eligible critical robots lie on a common
circle centered at $P^*$.
\end{proof}



 \begin{lemma}
  \label{l_18}
Every non-faulty robot with light \textsc{moved-NT-1} eventually changes its
light to either \textsc{moved-T-2} or \textsc{moved-NT-2}.
\end{lemma}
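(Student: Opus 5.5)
The proof is a progress argument on the Phase II rules, split according to whether $r_i$ is terminal when it is activated with light \textsc{moved-NT-1}. If $r_i$ is physically terminal at some activation while carrying \textsc{moved-NT-1}, rule~1 of Phase~II changes its light to \textsc{moved-T-2} at once. If $r_i$ becomes eligible under rule~2, it switches to \textsc{moving-2} and moves. By fairness it is activated again, and rule~3 then sets its light to \textsc{moved-T-2} or \textsc{moved-NT-2}. That activation is unconditional, since rule~3 depends only on $r_i$'s own light and its terminal status. So the only thing to prove is that a non-faulty robot holding \textsc{moved-NT-1} and staying non-terminal eventually satisfies the farthest condition of rule~2.

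For this we use a potential. Let $U(t)$ be the set of robots with light \textsc{moved-NT-1} at time $t$. No robot ever acquires \textsc{moved-NT-1} after finishing its Phase~I action, because the light transitions are monotone. Hence $U(t)$ is non-increasing, and it suffices to show that whenever $r_i\in U(t)$ is non-faulty, some element of $U(t)$ leaves $U$ within finitely many epochs. Consider the robots of $U(t)$ visible to $r_i$. Let $r^\ast$ be one at maximum distance from $P^*$ among them. If $r^\ast=r_i$, or if $r_i$ ties with it, then $r_i$ is eligible and we are done by fairness. Otherwise $r^\ast$ is farther from $P^*$ than $r_i$ and blocks it. We then repeat the argument with $r^\ast$ in place of $r_i$, following the visible robots of $U$ each farther from $P^*$ than the last. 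Since $U$ is finite and the distance to $P^*$ strictly increases along this chain, it ends at a robot $r_m$ that is farthest among its own visible \textsc{moved-NT-1} robots, or that is terminal.

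If $r_m$ is terminal, rule~1 removes it from $U$. If $r_m$ is non-terminal and eligible, rule~2 fires on its next activation; this holds even if $r_m$ is mobility-faulty, because faulty robots still update their lights. After that, rule~3 removes it from $U$. In either case $|U|$ drops. Iterating, $|U|$ keeps decreasing while $r_i$ remains in $U$, so eventually $r_i$ itself is the top of its chain, or is terminal, and it leaves $U$ through the cases above.

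The main obstacle is that the chain is defined through local visibility, and visibility changes as robots move. A robot at the end of the chain could become non-farthest, or could see a new \textsc{moved-NT-1} robot, before it is activated. To handle this, we note that robots in $U$ do not move except at their own rule~2 step. So between activations the distances from $P^*$ of members of $U$ are fixed, and only visibility can change. We would argue via Lemma~\ref{l_8} and the mutual visibility of friend robots that the robots whose distances matter to a non-terminal robot stay visible. Then the robot of globally maximum distance within each connected dependency component of $U$ is always eligible. Taking that global maximum of a component, rather than a locally built chain, as the robot that must leave $U$ avoids the instability, and this is the version we would write out.
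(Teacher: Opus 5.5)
Your reduction matches the paper. A \textsc{moved-NT-1} robot leaves that state only through rule~1 or rule~2, and rule~3 is unconditional. Mobility-faulty robots still advance their lights. So everything hinges on showing that the set $U$ of \textsc{moved-NT-1} robots keeps shrinking, which is the progress argument the paper states loosely (``among any finite set of visible unprocessed candidates, at least one has maximum distance from $P^*$'').

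However, the version you say you would write out rests on a step that fails. Rule~2 requires a robot to be farthest from $P^*$ among \emph{all} visible robots with light \textsc{moved-NT-1}, not only among robots of its own dependency component. So the maximum of one connected dependency component is not ``always eligible'': any farther \textsc{moved-NT-1} robot of another component that it sees blocks it. Such cross-component visibility is the typical case, since Lemma~\ref{l_8} makes non-faulty non-terminal robots mutually visible at the end of Phase~I. Your appeal to Lemma~\ref{l_8} to keep ``the robots whose distances matter'' visible is also unsupported. That lemma is a snapshot statement, and visibility changes as Phase~II robots move.

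A second inaccuracy: $U(t)$ is not non-increasing. Phases are not globally synchronized, so a robot still finishing Phase~I can acquire \textsc{moved-NT-1} later, possibly farther from $P^*$ than the current maximum.

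Both problems disappear if you take the maximum over all of $U$ and wait out the finitely many entries. Each robot enters $U$ at most once, so there is a finite time $T_0$ after which no robot enters $U$. Members of $U$ never move: a robot moves in Phase~II only in the round in which it switches to \textsc{moving-2}, and it thereby leaves $U$. So after $T_0$, a robot $r^\dagger$ attaining $\max_{r\in U}\|r-P^*\|$ at the start of an epoch is still in $U$ and still maximal at its next activation. At that activation it is either terminal (rule~1) or farthest among whatever \textsc{moved-NT-1} robots it sees (rule~2), whether faulty or not.

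Hence $|U|$ drops at least once per epoch, and $r_i$ leaves $U$ within $n$ further epochs. One more activation via rule~3 then yields \textsc{moved-T-2} or \textsc{moved-NT-2}. No visibility argument is needed. With this fix your proof becomes a correct and more careful version of the paper's, which glosses over the same local-versus-global maximum issue you identified.
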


\begin{proof}
Let $r_i$ be a non-faulty robot with light \textsc{moved-NT-1}. If $r_i$
becomes terminal before its Phase II movement, then at its next activation it
changes directly to \textsc{moved-T-2}. Otherwise, $r_i$ remains
non-terminal. On every current dependency line, its friend robots are visible.
Once these friends have completed their Phase I actions, the Phase II
eligibility rule applies.

Among any finite set of visible unprocessed candidates, at least one has
maximum distance from $P^*$. By fairness, an eligible non-faulty robot is
eventually activated and completes its Phase II action. After completing the
action it no longer has light \textsc{moved-NT-1}. Repeating the argument
eliminates all non-faulty robots with light \textsc{moved-NT-1} in finite
time.
\end{proof}




\begin{lemma}
 \label{l_19}
Every cyclic-dependency relation among non-faulty robots is eventually eliminated during Phase III.
\end{lemma}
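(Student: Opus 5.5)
The plan is to argue by a potential/ordering argument on the radial distances from $P^*$, using Lemma~\ref{l_10} and Lemma~\ref{l_16} as the starting point. Fix a cyclic dependency that survives Phase II, and let $r_1,\dots,r_k$ be its non-faulty critical non-terminal robots; each carries light \textsc{moved-NT-2}. By Lemma~\ref{l_10}, consecutive critical robots $r_a,r_{a+1}$ (which are friends on a common line and hence mutually visible) have distinct values $\rho_a\neq\rho_{a+1}$. Lemma~\ref{l_18} guarantees that no friend on a dependency line still has light \textsc{moved-NT-1} or \textsc{moving-2}, so the Phase III eligibility precondition eventually holds for every such robot.

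\textbf{Progress.} Consider, among the robots of the dependency still showing \textsc{moved-NT-2}, one with maximum distance to $P^*$. Since each of its dependent neighbours is visible and strictly closer, it satisfies the Phase III ``farthest'' rule. Conversely, any two directly dependent robots cannot be simultaneously eligible, because the one closer to $P^*$ sees the farther unprocessed one. Hence in any round, on every line of the dependency, at most one robot moves. By fairness the eligible robot is eventually activated, changes to \textsc{moving-3}, and, if non-faulty, performs a strictly positive radial move. The key geometric step is then: if exactly one of the three robots on a line of collinearity (Lemma~\ref{l_13} gives exactly three) moves, and it moves along $\overline{r_iP^*}$, which is not the line itself (the line through a critical robot does not contain $P^*$, by the definition of $NT_i$ and the safe region), then that collinearity is broken. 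Moreover, Lemma~\ref{l_12} together with the displacement bound $\sigma_i\le d_i/3^4$ ensures that no new collinearity with initially non-collinear robots appears. Thus the moved robot leaves the line and loses its critical/junction status on it, so the cycle is cut at that robot.

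\textbf{Termination of the dependency.} After the cut, the structure is at most a non-cyclic chain. The same farthest-first argument is applied inductively: the number of robots with light \textsc{moved-NT-2} in the chain strictly decreases with each completed Phase III action. Each action also destroys at least the lines on which the mover was non-terminal, except possibly lines whose other robots are faulty. Since the number of robots is finite and each robot moves at most once in Phase III (Lemma~\ref{l_11}), after finitely many epochs no non-faulty robot of the original cycle has light \textsc{moved-NT-2}. Each such robot has become \textsc{moved-T-3}, or \textsc{moved-NT-3}, in which case it is faulty by rule 3. Hence no cycle consisting of non-faulty critical robots remains.

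\textbf{Main obstacle.} The delicate step is showing that a single radial move of the eligible robot breaks \emph{all} of its lines in the dependency without recreating a collinearity among robots from the same initial line; Lemma~\ref{l_12} does not cover that case. I would handle it using Lemma~\ref{l_13}: any newly created line contains exactly three robots from one initial line. In addition, since $\sigma_i\le d_i/3^4$ is smaller than the clearance $d^{P^*}_{i,m}$ to every line $L^*_m$, the radial trajectory cannot cross the line through $P^*$ and a visible robot. So the mover cannot land on a line determined by two stationary visible robots, and the farthest-first order forbids a simultaneous dependent mover.
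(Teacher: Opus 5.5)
Your strategy matches the paper's: distinct radial distances of consecutive critical robots (Lemma~\ref{l_10}), farthest-first eligibility, a radial move that cuts the cycle, Lemmas~\ref{l_12}--\ref{l_13} against re-created collinearities, and finiteness. The gap is the geometric step you rely on. You assert that no line of the dependency through a critical robot contains $P^*$ ``by the definition of $NT_i$ and the safe region''. Those objects only decide which robots constrain a Phase~I direction: a robot whose line with $r_i$ passes through $P^*$ is simply left out of $NT_i(t)$. Nothing in the algorithm prevents a line of collinearity from passing through $P^*$, for instance a new one formed by two simultaneous movers with a stationary robot between them, and the algorithm description itself considers such lines. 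On such a line the radial displacement slides $r_i$ along the line, so the collinearity survives. Hence the goal of your last paragraph, that the move breaks \emph{all} of the mover's lines, is false in general. So is the derived conclusion that a non-faulty mover becomes terminal. Appealing to rule~3 to declare every \textsc{moved-NT-3} robot faulty is circular, since that sentence of the rule is an assertion, not a proven fact.

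The fix is what the paper's proof implicitly uses when it claims only that \emph{at least one} dependency edge is broken. A critical robot $r_i$ is the intersection of two distinct consecutive lines of the cycle. Since $r_i\neq P^*$, at most one of them contains $P^*$, so the radial move leaves the other. That line contains a non-faulty robot, so by Lemmas~\ref{l_5} and~\ref{l_13} it is a newly created line with exactly three robots; it therefore stops being a line of collinearity, and the cycle is cut.

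Your count of \textsc{moved-NT-2} lights then suffices, but only if taken globally. The Phase~III rule compares against \emph{all} visible \textsc{moved-NT-2} robots, so the farthest robot of the dependency need not be eligible; the globally farthest remaining candidate, however, always is. Thus some critical robot of the cycle eventually makes the first Phase~III move among them, and by your own Lemma~\ref{l_10} argument its cycle neighbours cannot move in that round.

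Lastly, the clearance $d^{P^*}_{i,m}$ is irrelevant for a radial move, because the segment $\overline{r_iP^*}$ meets a line through $P^*$ not containing $r_i$ only at $P^*$. What keeps the mover off a line $L_{jk}$ is the altitude term $d^i_{jk}\in\mathcal D_i(t)$. Moreover, the proof of Lemma~\ref{l_12} works verbatim for any triple that is non-collinear just before the round, which settles your ``same initial line'' concern directly.
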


\begin{proof}
At the end of Phase~II, either a cyclic dependency has already been broken or
its remaining non-faulty critical robots have pairwise distinct distances from
$P^*$, by Lemma~\ref{l_18}. Phase~III activates the farthest remaining robot
with light \textsc{moved-NT-2}. Therefore, among the non-faulty critical
robots of a dependency, at least one robot is selected before its dependent
neighbors.

The selected robot moves toward $P^*$. On each incident line of the cyclic
dependency, it is the only selected critical robot at that stage. Its
movement consequently breaks at least one dependency edge. By Lemma~\ref{l_13}, this
movement does not introduce a collinearity involving robots that were
previously non-collinear. After completing the movement, the robot changes
its light and is not selected again in Phase~III. Hence every effective
movement strictly decreases the number of unresolved dependency edges.
Because the number of such edges is finite, every cyclic dependency is
eventually eliminated.
\end{proof}
 
 \begin{lemma}
  \label{l_20}
Every non-faulty robot that is non-terminal at the beginning of
Phase III eventually becomes terminal.
\end{lemma}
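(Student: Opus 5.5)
We argue by a potential-function argument over the finitely many non-faulty robots carrying light \textsc{moved-NT-2} at the start of Phase III. Only these robots move in Phase III. Non-faulty robots that are non-terminal with light \textsc{moved-NT-1} are first driven to \textsc{moved-T-2} or \textsc{moved-NT-2} by Lemma~\ref{l_18}. Those that have already turned terminal are handled directly, since they display \textsc{moved-T-2} or \textsc{moved-T-3}. So we fix a non-faulty robot $r_i$ with light \textsc{moved-NT-2} and show that it eventually displays \textsc{moved-T-3}.

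The first step is eligibility. A robot with light \textsc{moved-NT-2} waits only while some friend on its dependency lines still shows \textsc{moved-NT-1} or \textsc{moving-2}. By Lemma~\ref{l_18} and fairness, this waiting ends in finite time. After that, we order the visible \textsc{moved-NT-2} candidates by their distance from $P^*$. Friends on a common line are visible, and by Lemmas~\ref{l_10} and \ref{l_16} consecutive dependent candidates have distinct radii. Hence, among the candidates that are pairwise dependent, there is a unique farthest robot, and it is eligible. Once that robot acts, its light leaves \textsc{moved-NT-2} permanently. So the number of \textsc{moved-NT-2} robots that block $r_i$ (visible and farther from $P^*$) strictly decreases. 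By induction on this number, $r_i$ itself becomes the farthest visible candidate after finitely many epochs, and fairness then forces it to execute its Phase III radial move.

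The second step shows that this move makes $r_i$ terminal. At the time of the move, by Lemma~\ref{l_13} every line of collinearity through $r_i$ contains exactly three robots. By the farthest rule, on each such line $r_i$ is the only robot moving in that round. The other robots are either terminal, and terminal robots do not move in Phase III, or faulty, or are candidates that are nearer to $P^*$ and visible to $r_i$, and are therefore not simultaneously eligible. The displacement is positive and points toward $P^*$. It is not along the line: if the line passed through $P^*$, $r_i$ would be collinear with $P^*$ on it, and this degenerate case should be excluded via the definition of $NT_i$ and the strict-interior property of the radial destination. So $r_i$ leaves every incident line. Lemma~\ref{l_12} rules out new collinearities with initially non-collinear robots. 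Because $\sigma_i(t)\le d_i(t)/3^4$, $r_i$ cannot land on a line determined by two visible robots. Hence $r_i$ lies in the interior of no segment, and at its next activation it sets \textsc{moved-T-3}. By Lemma~\ref{l_19}, cyclic dependencies cannot regenerate a non-terminal status afterwards.

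The main obstacle is the second step: ruling out that the single radial move of $r_i$ leaves it non-terminal. This can happen if its line passes through $P^*$, or if a neighbour on another incident line moves in the same round. I would first prove a short claim that two friends on a common line are never simultaneously eligible in Phase III, using the distinct-radius property. I would then treat the lines through $P^*$ separately, noting that the residual collinearities there involve a faulty non-terminal robot and are deferred to Phase IV.
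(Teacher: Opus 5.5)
The gap is in your second step, exactly where you flagged it, and it cannot be closed. (Your first step, the eligibility argument, is fine.) You need the single Phase~III radial move of $r_i$ to take it off every line on which it is interior. If such a line contains $P^*$, then $\mathbf{u}^R_i(t)$ points along that line. Since $\sigma_i(t)\le D_i(t)/3^4$ is smaller than the distance to either friend, $r_i$ stays strictly between its two friends after the move.

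Neither fact you cite excludes this case. $NT_i(t)$ is used only to build the Phase~I safe region $\operatorname{Reg}_i(t)$, and the strict-interior property of the radial destination only says that $r_i$ does not reach $P^*$. Deferring to Phase~IV does not help either. Phase~IV moves only terminal robots, the interior robot of this line is $r_i$ itself, and after its one Phase~III move $r_i$ records \textsc{moved-NT-3} and never moves again. So the claim that collinearities through $P^*$ involve a faulty non-terminal robot is precisely what is missing.

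That claim is in fact false in the stated model, because the linear branch of the Phase~I rule has no $P^*$-clearance. Take $x=(0,0)$, $w=(1,0)$, $y=(2,0)$, $z=(4,0)$, with $x$ and $w$ faulty from the start, $B\ge 2$, and $P^*=(10,\tfrac{5}{81})$; this $P^*$ lies outside the hull, which is a segment.

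Let $x$ reach \textsc{moved-T-1}, then let $w$ reach \textsc{moved-NT-1}, then activate $y$ and $z$ together. The configuration is linear with $D_y=1$ and $D_z=2$. If both robots choose the same perpendicular side, they land at $(2,\tfrac{1}{81})$ and $(4,\tfrac{2}{81})$. These points are collinear with $x$ on the line $Y=X/162$, which passes through $P^*$.

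Now $y$ is non-faulty and interior on that line. $w$ is terminal and switches to \textsc{moved-T-2}, which leaves $y$ as the only \textsc{moved-NT-1} candidate. $y$'s Phase~II and Phase~III moves both slide along the line without reaching $z$'s new position, and $y$ ends with \textsc{moved-NT-3}. $z$'s Phase~IV move also stays on the line. So $y$ is non-terminal at the start of Phase~III and never becomes terminal.

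Hence no argument of your shape can work without an extra hypothesis or a modified rule. One hypothesis that would suffice is that no line of collinearity ever contains $P^*$; a rule change that would suffice is a non-radial escape move when an incident line contains $P^*$. The paper's proof has the same hole: its step asserting that an endpoint's move ``breaks its incident collinearity'' relies on the same unproved fact.

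A smaller point: you also assume terminal friends are stationary during $r_i$'s move. That presumes a phase separation the rules as written do not enforce, since the Phase~IV rule has no precondition beyond being terminal with a \textsc{moved-T-1/2/3} light. A simultaneous radial move of a terminal friend, with adversarial non-rigid stops, can then preserve the collinearity.
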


\begin{proof}
Let $r_i$ be a non-faulty non-terminal robot at the beginning of Phase~III.
It has light \textsc{moved-NT-2}. By the distance ordering, the farthest
unprocessed non-terminal robot is eligible in every non-terminal
configuration. After its movement, that robot either immediately becomes
terminal or breaks at least one incident dependency.
Lemma~\ref{l_19} guarantees that cyclic dependencies cannot persist. Once the cyclic
parts are eliminated, the remaining dependencies form a finite non-cyclic
chains. Such a chain has an endpoint non-terminal robot whose movement breaks
its incident collinearity and makes it terminal. Removing this endpoint
shortens the chain. Induction on the number of robots in the dependency chain
shows that every non-faulty non-terminal robot eventually becomes terminal.
Fairness ensures that each eligible robot is activated after finitely many
rounds.
\end{proof}

\begin{lemma}
\label{l_21}
Every non-faulty robot with light \textsc{moved-NT-2} eventually performs its
Phase III action and leaves the state \textsc{moved-NT-2}.
\end{lemma}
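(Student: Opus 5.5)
The plan is a potential argument on the finite set of unprocessed Phase III candidates, using the radial ordering from $P^*$ and fairness of the scheduler. Let $S(t)$ be the set of robots whose light is \textsc{moved-NT-2} at time $t$. Only transitions out of \textsc{moved-NT-2} are possible (to \textsc{moving-3} and then to \textsc{moved-T-3} or \textsc{moved-NT-3}), so $S(t)$ is non-increasing. The goal is to show that $S(t)$ loses some non-faulty member within finitely many epochs whenever it still contains one. Then induction on $|S(t)|\le n$ yields the statement, with an $O(n)$-epoch bound.

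First I will handle the precondition. A robot with light \textsc{moved-NT-2} is eligible only when none of its friends on its current dependency lines has light \textsc{moved-NT-1} or \textsc{moving-2}. By Lemma~\ref{l_18}, every non-faulty friend eventually leaves \textsc{moved-NT-1}. A friend with light \textsc{moving-2} changes its light at its next activation. For a faulty friend, the Phase II rules still change its light, since faults affect only motion. Hence, after finitely many epochs, no friend of a robot in $S(t)$ blocks it.

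Next comes the main step. Fix a time after this blocking has ended. Choose $r_i\in S(t)$ that is non-faulty, non-terminal, and has maximum $\rho_i(t)$ among the non-terminal members of $S(t)$ that it sees. The difficulty is that ``farthest among visible'' is a local condition. I will argue by contradiction using a maximal element. Take $r^\ast\in S(t)$ with globally maximal distance from $P^*$, breaking ties arbitrarily.
\begin{itemize}
\item If $r^\ast$ is terminal, it is not a candidate; I will remove it from consideration and repeat with the remaining set.
\item If $r^\ast$ is non-terminal, no visible robot with light \textsc{moved-NT-2} is strictly farther from $P^*$, so $r^\ast$ satisfies rule~1 of Phase III and is eligible.
\end{itemize}
Equal distances can only occur among non-adjacent robots: by Lemma~\ref{l_10} and Lemma~\ref{l_16}, directly dependent friends have distinct radii. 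Simultaneous eligibility of equidistant robots is harmless here, because each of them still changes its light. By fairness, $r^\ast$ is activated within one epoch, turns \textsc{moving-3}, and moves if it is non-faulty. At its next activation it changes to \textsc{moved-T-3} or \textsc{moved-NT-3}. A faulty robot also leaves $S$, since the light change does not require motion. In either case $|S|$ strictly decreases.

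Finally, consider a non-faulty $r_i$ that is terminal while holding \textsc{moved-NT-2}. The rules as written let it act only if it is farthest among the visible non-terminal robots with light \textsc{moved-NT-2}. I will check that once all farther non-terminal candidates have left $S$, this condition holds vacuously or by maximality, so $r_i$ turns \textsc{moving-3} and then \textsc{moved-T-3}. I expect this edge case, together with the well-definedness of the local farthest rule when visibility is obstructed by faulty robots, to be the main obstacle. I would try to resolve it by invoking Lemma~\ref{l_8}, which gives mutual visibility of non-faulty non-terminal robots, so all relevant candidates see each other. Repeating the decrease step at most $n$ times empties $S$ of non-faulty robots, and in particular $r_i$ leaves \textsc{moved-NT-2}.
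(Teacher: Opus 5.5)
Your proposal is correct and follows essentially the paper's route. The waiting condition disappears because non-faulty friends leave \textsc{moved-NT-1} by Lemma~\ref{l_18} and faulty robots still update their lights; then the radial ordering from $P^*$ yields an eligible robot, and fairness plus iteration finishes the argument, with your global-maximum choice and your treatment of a terminal robot still holding \textsc{moved-NT-2} making explicit what the paper leaves implicit. One small repair: $S(t)$ can still grow while robots pass from \textsc{moving-2} to \textsc{moved-NT-2}, so run the potential argument only after all Phase II actions have finished (or use that each robot enters $S$ at most once); also, Lemma~\ref{l_8} is not needed, since a global maximum of $S$ is automatically farthest among the candidates it sees.
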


\begin{proof}
A robot with light \textsc{moved-NT-2} waits only for unfinished Phase II
actions on its current dependency lines. These dependencies involve visible
friend robots. By Lemma~\ref{l_18}, every non-faulty such friend eventually
leaves its Phase II state; a mobility-faulty robot can still update its light
and therefore also eventually records completion of its Phase II action.
Hence the local waiting condition eventually disappears. The distance ordering
then makes at least one remaining robot eligible for Phase III, and fairness
guarantees its activation. Thus every non-faulty robot eventually leaves
\textsc{moved-NT-2}.
\end{proof}




\begin{lemma}
\label{l_22}
During Phase IV, every non-faulty terminal robot can see every robot in the system.
\end{lemma}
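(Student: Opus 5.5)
We interpret the statement as: at the point where a non-faulty terminal robot $r_i$ has completed its Phase~IV action (light \textsc{moved-T-4}, after all Phase~III actions on its dependency lines have finished), no robot lies in the open segment joining $r_i$ to any other robot. The plan is a proof by contradiction. Suppose some $r_k$ lies in $\operatorname{int}(\overline{r_i r_j})$. Then $r_i,r_k,r_j$ lie on a line of collinearity $\mathcal L$, and $r_i$ is terminal on $\mathcal L$. We classify $\mathcal L$ using the earlier results.

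\textbf{Step 1: reduce to a single kind of line.} By Lemma~\ref{l_12}, robots that were initially non-collinear never become collinear. So $r_i,r_j,r_k$ originate from one initial line. By Lemma~\ref{l_5}, $\mathcal L$ cannot be a persisting initial line, since $r_i$ is non-faulty. Hence $\mathcal L$ is newly created, and by Lemma~\ref{l_13} it contains exactly three robot positions. By Lemmas~\ref{l_19}--\ref{l_20}, every non-faulty robot that was non-terminal has become terminal by the end of Phase~III. Therefore $r_k$, being non-terminal on $\mathcal L$, must be mobility-faulty, which matches the situation of Lemma~\ref{l_9}.

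\textbf{Step 2: show the Phase IV radial move breaks $\mathcal L$.} Since $r_k$ is faulty, it is immobile, so $\mathcal L$ can persist only if both terminals stay on the line through $r_k$. Moving $r_i$ by $\sigma_i>0$ along $\mathbf u^R_i$ leaves $\mathcal L$ unless $P^*\in\mathcal L$. I would therefore split into two cases.
\begin{enumerate}
\item \emph{$P^*\notin\mathcal L$.} The radial direction is transverse to $\mathcal L$, so after $r_i$ moves the triple is non-collinear. This remains true even if $r_j$ moves in the same round: each displacement is bounded by $d_i/3^4$, and the minimum-altitude strip argument of Lemma~\ref{l_12} applies to the configuration together with $r_k$.
\item \emph{$P^*\in\mathcal L$.} Here I would argue that this case cannot occur at Phase~IV. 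Every line of collinearity through $P^*$ has one terminal robot closer to $P^*$ on the same side, so $r_i$ and $r_j$ lie on the same ray or on opposite rays. For robots to be on a line through $P^*$ after Phase~III, they would have had to be collinear with $P^*$ at the moment of their last move, since all Phase II--IV moves are radial.
\end{enumerate}

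\textbf{Step 3: no new collinearities.} Next, show that the Phase~IV movement creates no new collinearity involving $r_i$. By Lemma~\ref{l_13}, any new line must consist of three robots from a single initial line. The clearance term $d^{P^*}_{i,m}$ in $d_i(t)$ prevents $r_i$ from reaching a line $L_m^*$ or a line through two visible robots. Together with Lemma~\ref{l_13}, this excludes such lines.

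The main obstacle is Step~2, case~2, where $P^*\in\mathcal L$ and the radial direction runs along $\mathcal L$. I would try to use the fact that $P^*$ lies outside the initial convex hull, together with the constraint $r_i(t)\notin L_m^*$ built into $\mathcal D_i$, to show that no three robots can become collinear with $P^*$ after Phase~I. Phase~I directions are chosen generically in $\operatorname{Reg}_i$, so the final robot positions avoid lines through $P^*$. If that fails, I would fall back on the Phase~I tie-breaking freedom in choosing $x_i$, arguing that the algorithm may avoid $P^*$-lines, and treat it as an implicit requirement of the direction choice.
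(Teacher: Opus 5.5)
Your outline follows the paper's proof. You reduce, via Lemmas~\ref{l_19}--\ref{l_21}, to a mobility-faulty non-terminal blocker $r_k$ on a three-robot line $\mathcal L$. You then argue that the radial Phase~IV move of $r_i$ leaves $\mathcal L$, and use Lemma~\ref{l_13} and the clearance $d_i(t)$ to exclude new collinearities. The gap is that the step everything rests on, namely $P^*\notin\mathcal L$, is left open in your Case~2. The paper only asserts that ``the Phase~IV construction guarantees that the current line of collinearity through $r_i$ does not contain $P^*$.'' The Phase~IV rule contains no such test, so this must be derived from earlier phases, and none of your suggested routes does that:
\begin{itemize}
\item $P^*$ lying outside the initial convex hull does not prevent lines of collinearity through $P^*$; the paper explicitly allows an initial line through $P^*$.
\item The condition $r_i(t)\notin L_m^*(t)$ in $\mathcal D_i(t)$ only decides whether the term $d^{P^*}_{i,m}(t)$ is included. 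That term keeps $r_i$ off $L_m^*$ only when $r_m$ is visible to $r_i$ and does not make a Phase~I move in the same round.
\item The Phase~I target $x_i$ is an arbitrary point of $\operatorname{Reg}_i(t)$ and stopping points are adversarial, so no genericity is available.
\item Adding a requirement on $x_i$ changes the algorithm rather than proving the lemma.
\end{itemize}

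A natural way to close the gap starts from your own observation, stated correctly. Since Phase~II--IV moves are radial, whether $r_i,r_k,P^*$ are collinear can change only in a round in which $r_i$ or $r_k$ makes its Phase~I move. Moreover, the strip argument in the proof of Lemma~\ref{l_12} applies to any triple that is non-collinear before a round. Hence $\{r_i,r_k,r_j\}$ must have been collinear in every earlier round, and its line can change only in rounds where at least two of the three move. This also makes your Step~3 immediate. What is missing is the case analysis showing that the round containing the last Phase~I move among these robots cannot leave them on a line through $P^*$. That analysis would use the clearance term, the fact that a Phase~I mover leaves the line to each visible robot, and the $\operatorname{Reg}$ half-plane rule.

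There is also a smaller error in Case~1. The strip argument of Lemma~\ref{l_12} needs a non-degenerate triangle. For the collinear triple $r_i,r_k,r_j$ the minimum altitude is $0$, so it says nothing about simultaneous moves. The correct reason is one-sidedness, since $r_k$ is immobile and any of $r_i,r_j$ that moves in that round enters the open half-plane $H$ of $\mathcal L$ containing $P^*$:
\begin{itemize}
\item a radial move does so because $P^*\notin\mathcal L$;
\item a Phase~I move does so because $r_k$ is a visible non-terminal robot with $P^*\notin\mathcal L$, so $\operatorname{Reg}\subseteq H$.
\end{itemize}
If only $r_i$ moves, the new position $r_i'\notin\mathcal L$ already breaks the collinearity. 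If both move, note that a line through $r_k$ meets $H$ in a single open ray. The bound $\lambda_i(t)\le D_i(t)/3^4<\|r_i(t)-r_k(t)\|$, and the same bound for $r_j$, keeps the projections of the new positions onto $\mathcal L$ strictly on opposite sides of $r_k$. So the new positions and $r_k$ cannot be collinear.
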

 \begin{proof}
At the beginning of Phase~IV, Lemma~\ref{l_20} and Lemma~\ref{l_21} implies that every non-faulty robot is
terminal. Thus any remaining visibility obstruction for a non-faulty robot
must be caused by a faulty robot lying between two terminal robots.
Let $r_i$ be an eligible non-faulty terminal robot that does not see any other robots. It moves toward a destination on the ray from $r_i$ toward $P^*$.
The Phase~IV construction guarantees that the current line of collinearity
through $r_i$ does not contain $P^*$. Hence, the radial movement of $r_i$
leaves that line immediately and remove the obstruction. By Lemma~\ref{l_13} and the
definition of $d_i(t)$, this movement does not create a collinearity with
robots that were previously non-collinear.
After the movement, $r_i$ changes its light to \textsc{moved-T-4} and does
not move again. Since every remaining obstructed non-faulty terminal robot
eventually becomes eligible and is activated by fairness, all such
obstructions are removed in finite time. Therefore, every non-faulty terminal
robot eventually sees every robot, including the mobility-faulty robots.
\end{proof}

\begin{lemma}
\label{l_23}
The movements of the robots are collision-free.
\end{lemma}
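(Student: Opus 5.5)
We prove Lemma~\ref{l_23} by induction over the $\mathsf{SSYNC}$ rounds, showing that robot positions stay pairwise distinct. Before the first round this holds by assumption. Now fix a round and suppose the positions are distinct just before it. Let $S$ be the set of robots that actually move in this round. Every $r_i\in S$ follows a straight segment from $r_i(t)$ towards $\operatorname{dest}_i(t)$. Its executed displacement is at most $\sigma_i(t)$, because non-rigid interruptions and mobility faults only shorten it. In every phase and in both the linear and non-linear cases, the displacement rule gives
\[
\sigma_i(t)\le \frac{D_i(t)}{3^4},
\]
where $D_i(t)$ is the minimum distance between any two robot positions among those $r_i$ sees, including itself. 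So a robot's path stays inside a small disk around its start point, and the task is to show that these disks, or at least the endpoints, cannot meet.

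\textbf{Case 1: the two robots see each other.} Let $r_i$ and $r_j$ be distinct robots with $r_j$ visible to $r_i$. Then $\|r_i(t)-r_j(t)\|\ge D_i(t)$, and by symmetry $\|r_i(t)-r_j(t)\|\ge D_j(t)$. If both move, their total displacement is at most
\[
\frac{D_i(t)+D_j(t)}{3^4}\le \frac{2}{81}\|r_i(t)-r_j(t)\|<\|r_i(t)-r_j(t)\|.
\]
If only $r_i$ moves, the bound is $\frac{1}{81}\|r_i(t)-r_j(t)\|$. By the triangle inequality the two robots end at distinct points. In fact their trajectories stay inside disjoint disks of radius $\|r_i(t)-r_j(t)\|/81$, so they never meet during the movement either.

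\textbf{Case 2: the two robots do not see each other.} This is the main obstacle, because $D_i(t)$ gives no information about a robot that $r_i$ cannot see. Then some robot lies in the open segment $\overline{r_i(t)r_j(t)}$. Let $\widehat r$ be the first robot $r_i$ sees on that segment, so $\widehat r$ lies strictly between $r_i(t)$ and $r_j(t)$. The planned argument is:
\begin{itemize}
\item Since $\widehat r$ is visible to $r_i$, the displacement of $r_i$ is at most $\|r_i(t)-\widehat r\|/81$.
\item Symmetrically, let $\widetilde r$ be the first robot $r_j$ sees on the segment from $r_j(t)$ towards $r_i(t)$. The displacement of $r_j$ is at most $\|r_j(t)-\widetilde r\|/81$.
\item Both $\widehat r$ and $\widetilde r$ lie on the segment, with $\widehat r$ at least as close to $r_i(t)$ as $\widetilde r$ is (they may coincide). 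Hence
\[
\|r_i(t)-\widehat r\|+\|\widetilde r-r_j(t)\|\le\|r_i(t)-r_j(t)\|.
\]
\end{itemize}
Combining these, the total displacement of $r_i$ and $r_j$ is at most $\|r_i(t)-r_j(t)\|/81<\|r_i(t)-r_j(t)\|$, so again they cannot coincide at any moment of the round.

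Because SSYNC rounds are synchronous snapshots, these pairwise bounds hold simultaneously for every pair of robots. So no two robots occupy the same point at any time during the round, which completes the induction.

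Radial moves need no extra treatment. Because $\sigma_i(t)\le\rho_i(t)/81$, a robot never reaches $P^*$, and since $P^*$ is not a robot this causes no collision. Faulty robots stay fixed, which is covered by setting their displacement to zero. In the linear case $d_i(t)$ does not appear in $\sigma_i(t)$, but the bound through $D_i(t)$ is still present, so the same argument applies.
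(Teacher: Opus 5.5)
Your proof is correct and follows essentially the same route as the paper. You bound each executed displacement by $D_i(t)/3^4$, relate $D_i(t)$ to $\|r_i(t)-r_j(t)\|$ through the first visible robot on the segment when $r_j$ is hidden, and use the triangle inequality to conclude that the two executed segments cannot meet. The only difference is minor: in the non-visible case the paper bounds each displacement separately by $\|r_i(t)-r_j(t)\|/3^4$, for a total of $2\|r_i(t)-r_j(t)\|/3^4$, so your sharper two-sided bound via $\widehat r$ and $\widetilde r$ is valid but not needed.
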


\begin{proof}
Consider an arbitrary $\mathsf{SSYNC}$ round starting at time $t$, and let
$\lambda_i(t)$ denote the distance actually traversed by robot $r_i$
during this round; set $\lambda_i(t)=0$ if $r_i$ does not move.
By the destination construction, $\lambda_i(t)\leq \sigma_i(t)
\leq \frac{D_i(t)}{3^4}.$ 

Consider any two distinct robots $r_i$ and $r_j$, and let $d_{ij}(t)=\|r_i(t)-r_j(t)\|.$ If $r_i$ sees $r_j$, then, by the definition of $D_i(t)$, $D_i(t)\leq d_{ij}(t).$ If $r_i$ does not see $r_j$, opacity implies that some robot lies in
the open segment $\overline{r_i(t)r_j(t)}$. Let $r_k$ be the first
such robot visible from $r_i$. Then $D_i(t)\leq \|r_i(t)-r_k(t)\|<d_{ij}(t).$ Thus, in either case, $\lambda_i(t)\leq \frac{d_{ij}(t)}{3^4}.$ The same argument gives $\lambda_j(t)\leq \frac{d_{ij}(t)}{3^4}.$ Consequently, $\lambda_i(t)+\lambda_j(t)
\leq \frac{2d_{ij}(t)}{3^4}
<d_{ij}(t).$

Suppose, for contradiction, that the two executed movement segments
intersect at a point $x$. Since $x$ lies on both executed segments, $d_{ij}(t)
\leq \|r_i(t)-x\|+\|x-r_j(t)\|
\leq \lambda_i(t)+\lambda_j(t)
<d_{ij}(t),$ a contradiction. This argument also covers the case in which one
robot is stationary, by taking its displacement to be zero.
Hence, the executed trajectories of every pair of robots are
disjoint in every $\mathsf{SSYNC}$ round. Non-rigid interruption only shortens
an executed segment and therefore cannot create a collision.
Thus, all movements are collision-free.
\end{proof}




  \begin{lemma}
  \label{l_24}
  The proposed algorithm Mutual-Visibility-Mobility-Fault() solves the problem in $O(n)$ epochs.
  \end{lemma}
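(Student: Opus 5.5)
The proof splits into a correctness part and a running-time part, and both mostly assemble earlier lemmas.

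\textbf{Correctness.}
\begin{itemize}
\item \emph{Budget.} By Lemma~\ref{l_11}, each robot moves at most four times. Each executed displacement is at most $\sigma_i(t)\le B_i(t)/3^4$, where $B_i(t)$ is the remaining budget. If $B_i$ denotes the budget before a move, the budget after it is at least $(1-3^{-4})B_i$. So by induction the remaining budget never becomes negative, and in fact the total distance satisfies $\ell_i\le B\bigl(1-(1-3^{-4})^4\bigr)<B$.
\item \emph{Collisions.} Collision-freedom is exactly Lemma~\ref{l_23}.
\item \emph{Visibility.} Lemmas~\ref{l_18}, \ref{l_21} and \ref{l_20} show that every non-faulty robot eventually becomes terminal with a Phase~II or Phase~III completion color. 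Lemma~\ref{l_22} then gives that, after its Phase~IV move, every non-faulty robot sees every robot, faulty ones included.
\item \emph{Persistence.} Lemma~\ref{l_12} and Lemma~\ref{l_13} ensure that later moves by other robots cannot re-create an obstruction. Any new collinearity would have to consist of robots from a common initial line, and those lines have already been processed.
\item \emph{Termination.} The termination rule fires only when all visible robots show \textsc{moved-T-4} or \textsc{moved-NT-3}, so no robot stops prematurely.
\end{itemize}

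\textbf{Running time.} I will charge epochs phase by phase and aim for $O(n)$ per phase.
\begin{itemize}
\item \emph{Phase~I.} On each initial line, eligibility advances from the two ends by at least one robot per two epochs: one epoch to move and set \textsc{moving-1}, one to set the moved color. Lines are processed in parallel, so $O(n)$ epochs suffice.
\item \emph{Phase~II.} Consider the farthest unprocessed \textsc{moved-NT-1} robot among its visible competitors. It is eligible in every epoch, so in any two consecutive epochs at least one candidate completes. There are at most $n$ candidates, giving $O(n)$ epochs. The same argument, using the distinct distances from Lemma~\ref{l_10}, gives $O(n)$ epochs for Phase~III.
\item \emph{Phase~IV.} Every terminal robot is eligible without waiting, so Phase~IV takes $O(1)$ epochs.
\item \emph{Overlap.} Phases run asynchronously across dependencies, but local waiting conditions only ever point to a robot that is itself making progress. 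This lets me sum the phase bounds, so the total is $O(n)$ epochs.
\end{itemize}

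\textbf{Main obstacle.} The hard step is the "at least one completion per $O(1)$ epochs" claim in Phases~II and~III. The farthest-robot rule is evaluated only over visible robots, so two non-dependent robots may both wait on different farther robots.

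My first approach is a potential argument: take the globally farthest unprocessed candidate among all robots in that phase. Its visible competitors are all nearer, so it is eligible, and by fairness it acts within one epoch and finalizes its light within the next. This yields a strict decrease of the number of unprocessed candidates every two epochs, independently of visibility. The one caveat is faulty robots that cannot see every candidate; they still update lights, so they do not block progress.
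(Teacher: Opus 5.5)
Your route to the time bound is the paper's: count, phase by phase, the at most $n$ robots that still have to act, show that one of them finishes within $O(1)$ epochs, and add up. The paper's proof contains only this count; the correctness facts you collect are assembled in Theorem~\ref{t_1}. Your globally-farthest-candidate argument for Phases~II and~III improves on the paper, which simply asserts constant-epoch progress.

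\textbf{The Phase~I step fails.} It is not true that on every initial line the frontier advances by one robot per two epochs, so ``lines are processed in parallel'' bounds nothing. Take a robot $x$ that is interior to a line $\mathcal L$ and also interior to another line $\mathcal M$. It has a friend pair on each line and is eligible only when each pair contains a finished robot. If $x$ lies deep inside $\mathcal M$, the frontier on $\mathcal L$ can stall at $x$ for $\Omega(|\mathcal M|)$ epochs, and such stalls can chain across lines.

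The bound survives with the same global extremal idea you use for Phase~II. This is also what the paper's assertion that ``at least one previously unprocessed robot completes its Phase-I action within a constant number of epochs'' actually needs. Let $S$ be the set of robots whose light is still \textsc{off}, and let $r$ be a vertex of $\operatorname{conv}(S)$. Since $r$ is not strictly between two robots of $S$, on every line where $r$ is non-terminal its friend on one side has already left \textsc{off}. Now consider an epoch with no light change at all. Nobody moves, because every move is paired with a light change. Nobody shows \textsc{moving-1}, because such a robot changes its light at its next activation. So $r$ is eligible throughout the epoch (reading the test as ``has finished Phase~I'') and is activated, a contradiction.

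Your farthest-candidate argument supplies the analogous step in Phases~II and~III, so you get the statement ``some light changes in every epoch until termination''. Each robot changes its light at most eight times, so this gives $O(n)$ epochs for all phases at once. It also repairs your Overlap bullet, which as written is not right: a robot being waited on may itself be waiting. What is true is only that a globally extremal waiting robot is never blocked.

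\textbf{The Persistence bullet does not follow from what you cite.} Lemma~\ref{l_13} explicitly allows a move to create a three-robot collinearity inside one initial line. That those lines are ``already processed'' is, if anything, a reason such a collinearity would never be removed, because robots showing \textsc{moved-T-4} never move again. Here is an argument that works:
\begin{itemize}
\item The proof of Lemma~\ref{l_12} uses only that the triple is non-collinear \emph{before the round}. Hence every triple that is collinear after a round was already collinear before it.
\item Each executed step is at most $1/3^4$ of the distance to any other robot (proof of Lemma~\ref{l_23}). So a robot that is the middle point of a collinear triple remains the middle point.
\item Therefore a non-faulty robot that sees every robot keeps seeing every robot.
\end{itemize}

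A minor slip as well: robots that are terminal after Phase~I keep \textsc{moved-T-1} and never acquire a Phase~II or~III color.
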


\begin{proof}
In Phase I, every non-faulty robot performs exactly one Phase-I
movement, and eligibility propagates from the terminal robots toward
the interior of each dependency. Hence, by fairness, at least one
previously unprocessed robot completes its Phase-I action within a
constant number of epochs. Since there are at most $n$ robots,
Phase I requires $O(n)$ epochs.
In Phase II, every completed action removes one robot from the set
with light \textsc{moved-NT-1}; similarly, in Phase III, every
completed action removes one robot from the set with light
\textsc{moved-NT-2}. Since each such set contains at most $n$
robots and an eligible robot completes its action within a constant
number of epochs, each of Phases II and III requires $O(n)$ epochs.
Finally, every eligible terminal robot performs at most one
Phase-IV movement, so Phase IV also requires $O(n)$ epochs.
Therefore, the four phases together complete in $O(n)$ epochs.
\end{proof}

 \begin{lemma}
 \label{l_26}
 During the execution of Mutual-Visibility-Mobility-Fault, the total Euclidean distance traversed by every robot is at most its prescribed movement budget $B$. 
 \end{lemma}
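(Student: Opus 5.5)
The plan is to bound the total distance traversed by an arbitrary robot $r_i$ by summing its executed displacements over all of its movements, using Lemma~\ref{l_11} to control the number of movements and the displacement rule of Section~\ref{comnputation} to control each individual movement. The key point is that every displacement formula, in both the linear and the non-linear case and in every phase, contains the factor $\frac{1}{3^4}\min\{\cdots,B_i(t)\}$. So any single computed displacement satisfies $\sigma_i(t)\le B_i(t)/3^4$, where $B_i(t)$ is the remaining budget at the moment of the Look.

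First, I will argue that the executed displacement $\lambda_i(t)$ in a round never exceeds $\sigma_i(t)$. A non-rigid interruption only shortens the segment toward $\operatorname{dest}_i(t)$. A mobility-faulty robot executes zero displacement and, by the model, consumes no budget. Since the robot moves along a straight segment, the Euclidean distance traversed in that round equals $\lambda_i(t)\le \sigma_i(t)\le B_i(t)/3^4$.

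Second, I will set up the budget recursion. The budget is deducted exactly by the executed length, and the light changes between a Look and the corresponding Move do not alter $B_i$. Hence $B_i(t')=B_i(t)-\lambda_i(t)\ge (1-3^{-4})B_i(t)$ for the next activation time $t'$. By induction over the successive movements $1,\dots,m$ of $r_i$, the remaining budget stays positive. More directly, the total distance $\ell_i=\sum_s \lambda_i(t_s)$ satisfies $B-\ell_i=B\,\prod_s(1-\lambda_i(t_s)/B_i(t_s))\ge B(1-3^{-4})^{m}>0$. Therefore $\ell_i<B$ regardless of the number of movements $m$. As a cross-check, Lemma~\ref{l_11} gives $m\le 4$, so even the crude additive bound $\ell_i\le 4B/3^4<B$ holds, since each $\lambda_i\le B_i(t_s)/3^4\le B/3^4$.

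The main obstacle is making sure that no movement escapes the $B_i(t)$ factor. Two places need checking. One is the Phase I linear case, where $d_i(t)$ is absent but $B_i(t)$ is still in the minimum. The other is the case $d_i(t)=+\infty$, where the minimum is still finite because $B_i(t)<\infty$. I also need to verify that a robot which is mobility-faulty, and then has its light advanced anyway, is charged nothing, and that in $\mathsf{SSYNC}$ a robot moves at most once per round, so movements are exactly indexed by activations. With these checks done, $\ell_i\le B$ for every $r_i\in\mathcal R$, and the statement follows. Faulty robots are trivially covered because they stop moving after the fault.
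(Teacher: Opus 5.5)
Your proposal is correct and follows essentially the same route as the paper. Both arguments bound each executed move by $\sigma_i(t)\le B_i(t)/3^4<B_i(t)$, use that only the executed length is deducted, and telescope over the movements to get $\ell_i=B-B_i^{(\mathrm{final})}\le B$; your multiplicative form only sharpens this to $\ell_i<B$, and the cross-check via Lemma~\ref{l_11} is an optional extra the paper does not use.
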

\begin{proof}
For every movement of $r_i$, $\sigma_i(t)\leq \frac{B_i(t)}{3^4}<B_i(t),$ and only the actually traversed distance is deducted from the
remaining budget. If $\ell_i^{(q)}$ is the distance traversed in
the $q$-th movement, then $B_i^{(q+1)}=B_i^{(q)}-\ell_i^{(q)}\geq 0.$ Therefore, by telescoping, $\sum_q\ell_i^{(q)}
   =B-B_i^{(\mathrm{final})}\leq B.$ Hence, every robot respects its prescribed movement budget.
\end{proof}

\begin{theorem}
\label{t_1}
There exists a collision-free deterministic algorithm that solves the mutual visibility problem in $O(n)$ epochs for $\mathsf{SSYNC}$ robots with non-rigid, budget-constrained movements under the \textsc{FALL} light model using only $12$ light colors and tolerating an arbitrary number of mobility faults.
\end{theorem}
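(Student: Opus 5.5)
The proof of Theorem~\ref{t_1} will be an assembly of the lemmas already established, organised around the requirements in the statement.

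\textbf{Feasibility and safety.}
First, the algorithm \textsc{Mutual-Visibility-Mobility-Fault()} is deterministic. Every rule depends only on the snapshot (positions, lights, $P^*$) and the locally available $B_i(t)$. The one apparent freedom, the choice of $x_i$ or $y_i$ in Phase~I, is resolved by fixing a canonical choice, e.g.\ the direction toward $P^*$ perturbed by a fixed rule so that it avoids the finitely many lines $L_{ik}(t)$. Collision-freedom then follows from Lemma~\ref{l_23}, and the budget constraint $\ell_i\le B$ from Lemma~\ref{l_26}. Lemma~\ref{l_11} supplies the bounded number of moves used in these arguments.

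\textbf{Correctness (mutual visibility).}
This is where I expect the main effort to lie. I would chain the phase lemmas in order:
\begin{itemize}
\item Lemma~\ref{l_2}, Lemma~\ref{l_5} and Lemma~\ref{l_7} show that at the end of Phase~I every surviving collinearity contains a faulty robot.
\item Lemma~\ref{l_12} and Lemma~\ref{l_13} show that no collinearity is ever created among initially non-collinear triples, so the set of potential obstructions stays finite and controlled.
\item Lemma~\ref{l_18}, Lemma~\ref{l_21}, Lemma~\ref{l_19} and Lemma~\ref{l_20} show that every non-faulty robot eventually leaves the Phase~II and Phase~III states and is terminal.
\item Lemma~\ref{l_22} removes the remaining obstructions caused by faulty non-terminal robots.
\end{itemize}
The obstacle is showing that, at some single finite time $T$, every non-faulty robot sees every robot simultaneously. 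The Phase~IV movements of different robots could, a priori, interfere with each other's visibility. My plan is to argue this as follows. After its \textsc{moved-T-4} transition a robot never moves again. Any later movement by another robot $r_j$ is bounded by $d_j(t)/3^4$, and by Lemma~\ref{l_12} it cannot place $r_j$ on a segment between two robots that were not initially collinear. Combined with Lemma~\ref{l_13}, this shows that no fresh obstruction is introduced. I would also add a short argument that a robot becoming faulty mid-phase only freezes a position already accounted for by these lemmas.

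\textbf{Termination, time and colors.}
Termination is then detected by the rule that all visible lights are \textsc{moved-T-4} or \textsc{moved-NT-3}. Lemma~\ref{l_24} gives the $O(n)$ epoch bound. Finally, I would count the colors explicitly:
\begin{itemize}
\item \textsc{off};
\item \textsc{moving-1}, \textsc{moved-T-1}, \textsc{moved-NT-1};
\item \textsc{moving-2}, \textsc{moved-T-2}, \textsc{moved-NT-2};
\item \textsc{moving-3}, \textsc{moved-T-3}, \textsc{moved-NT-3};
\item \textsc{moving-4}, \textsc{moved-T-4}.
\end{itemize}
This totals $12$, which completes the theorem.
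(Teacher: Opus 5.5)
Your assembly of the lemmas is the same as the paper's. Its proof of Theorem~\ref{t_1} just cites Lemmas~\ref{l_18}--\ref{l_22}, \ref{l_23}, \ref{l_24} and \ref{l_26} and asserts the color count; your explicit list (\textsc{off} plus $3+3+3+2$) and your determinism remark are harmless additions. The gap is in the step you yourself call the crux: the existence of a single time $T$. You argue that once a robot shows \textsc{moved-T-4}, no later move creates a fresh obstruction ``by Lemma~\ref{l_12} \ldots\ combined with Lemma~\ref{l_13}''. That does not follow. Lemma~\ref{l_12} only concerns triples that are non-collinear in the \emph{initial} configuration. Lemma~\ref{l_13} explicitly allows a movement to create a new three-robot line of collinearity, provided its robots come from one initial line. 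So, as far as those two statements go, a Phase~IV move of $r_j$ could put $r_j$, a faulty $r_a$, and a robot $r_b$ already showing \textsc{moved-T-4} on a new line with $r_a$ in the middle (or put $r_j$ in the middle of two such robots). Since $r_a$ cannot move and $r_j,r_b$ never move again, that obstruction would be permanent. Obstructions among initially collinear robots are exactly what Phases~II--IV are designed for, so excluding initially non-collinear triples does not settle this.

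What closes the step is the round-by-round content of the proof of Lemma~\ref{l_12}. Its induction step only assumes that $r_i,r_j,r_k$ are non-collinear immediately before the round: each displacement is then below $h/2$, and a triangle of minimum altitude $h$ does not fit in a strip of width less than $h$. So it actually shows that \emph{any} triple that is non-collinear at some time stays non-collinear forever. Hence the set of collinear triples can only shrink; the ``new'' lines of Lemma~\ref{l_13} are old collinear triples whose line has rotated. Moreover, every displacement is below $D_i(t)/3^4$, so projecting onto the old line shows that the middle robot of a surviving triple stays in the middle. Consequently, no unobstructed pair ever becomes obstructed, and the number of obstructed pairs is finite and non-increasing. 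Lemma~\ref{l_22} eventually removes each obstruction involving a non-faulty robot, so a common $T$ exists. Note that the paper's own proof of Lemma~\ref{l_22} makes the same appeal to Lemma~\ref{l_13}. Its proof of the theorem simply takes Lemma~\ref{l_22} as already delivering simultaneous visibility, so this monotonicity argument is what you need to supply for your version to be complete.
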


\begin{proof}
By Lemmas~\ref{l_18}--\ref{l_22}, every non-faulty robot eventually attains
unobstructed visibility of every robot in the system, including
the mobility-faulty robots. Lemma~\ref{l_23} guarantees collision-free
movements, Lemma~\ref{l_24} establishes the $O(n)$-epoch complexity,
and Lemma~\ref{l_26} shows that every robot respects its prescribed
movement budget. The algorithm uses only twelve light colors.
Hence \textsc{Mutual-Visibility-Mobility-Fault()} satisfies all
requirements of the budget-constrained strong fault-tolerant
mutual visibility problem.
\end{proof}

 \section{Conclusion}
\label{sec:conclusion}

We studied the mutual visibility problem under the budget-constrained mobility fault model for opaque luminous robots operating in the $\mathsf{SSYNC}$ model with non-rigid movements. Unlike existing algorithms, our solution neither relies on convex-hull decomposition nor assumes one-axis agreement. Instead, we showed that access to a common fixed external reference beacon is sufficient to coordinate the robots without requiring
axis agreement. We presented the first deterministic non-convex-hull-based algorithm for this model that guarantees collision-free mutual visibility from every admissible initial robot configuration satisfying the
reference-beacon assumption while respecting the prescribed movement budget and tolerating an arbitrary number of mobility faults. The algorithm uses only 12 light colors and terminates in $O(n)$ epochs. Future work includes extending the proposed algorithm to the $\mathsf{ASYNC}$ model and investigating whether the mutual visibility problem can be solved under the budget-constrained mobility fault model without assuming a common fixed reference point and any form of local axes agreement.

\bibliographystyle{plain} 
\bibliography{Biblo}

\end{document}